\newcommand{\ZZ}{\mathbb{Z}}
\newcommand{\NN}{\mathbb{N}\,}
\newcommand{\PP}{\mathbb{P}}
\newcommand{\EE}{\mathbb{E}}
\newcommand{\gd}{\mathscr{D}}
\newcommand{\gdd}{\mathscr{I}}
\newcommand{\RP}{\mathcal{R}}
\newcommand{\pt}[1]{\mathsf{#1}}
\newcommand{\Mod}[1]{\ \mathrm{mod}\ #1}
 \newtheorem{theorem}{Theorem} 
 \newtheorem{lemma}{Lemma} 
 \newtheorem{corollary}{Corollary} 
 \newtheorem{definition}{Definition} 
  \newtheorem{proposition}{Proposition}
\title{Geometric Dominating Sets}
\author{Oswin Aichholzer$^\bot$, David Eppstein$^\circ$, Eva-Maria Hainzl\thanks{supported by the
Austrian  Science Foundation FWF, project F 55-02.}}
\date{ \small{$^\bot$ Graz University of Technology, $^\circ$ University of California, Irvine, $^\ast$ TU Wien}}
\begin{document}

\maketitle

\begin{abstract}
  We consider a minimizing variant of the well-known \emph{No-Three-In-Line Problem}, the \emph{Geometric Dominating Set Problem}:
  What is the smallest number of points in an $n\times n$~grid such that every grid point lies on a common line with two of the points in the set?
  We show a lower bound of $\Omega(n^{2/3})$ points and provide a constructive upper bound of size $2 \lceil n/2 \rceil$.
  If the points of the dominating sets are required to be in general position we provide optimal solutions for grids of size up to $12 \times 12$.
  For arbitrary $n$ the currently best upper bound for points in general position remains the obvious $2n$. Finally, we discuss the problem on the discrete torus where we prove an upper bound of $O((n \log n)^{1/2})$. For $n$ even or a multiple of 3, we can even show a constant upper bound of 4. We also mention a number of open questions and some further variations of the problem.  \\
\end{abstract}
\thispagestyle{specialfooter}

\section{Introduction}
The well-known \emph{No-Three-In-Line Problem} asks for the largest point set in an $n\times n$~grid without three points in a line. 
This problem has intrigued many mathematicians including for example Paul Erdős for roughly 100 years now. 
Few results are known and explicit solutions matching the trivial upper bound of $2n$ only exist for $n$ up to $46$ and $n = 48, 50, 52$ (see e.g.~\cite{Flammenkamp97}). 
Providing general bounds seems to be notoriously hard to solve; see~\cite{Gardner76,Hainzl20} for some history of this problem.

In this note we concentrate on two interesting minimizing variants of the No-Three-In-Line problem, which we call the \emph{Geometric Dominating Set Problem}: What is the smallest number of points, or of points in general position, in an $n\times n$~grid such that every grid point lies on a common line with two of the points in the set? The general-position variant also answers the question: What is the smallest possible output of a greedy algorithm for the No-Three-In-Line problem, that considers the points in an adversarially-chosen ordering and constructs a solution by adding points when they do not belong to lines formed by previously added pairs of points? These problems came to our attention during the 2018 Bellairs Winter Workshop on Computational Geometry. Later we found out that they were already considered in 1974 by Adena, Holton and Kelly \cite{Adena74}, and in 1976 in Martin Gardner's ``Mathematical Games'' column~\cite{Gardner76}. Gardner wrote: ``{\it Instead of asking for the maximum number of counters that can be put on an order-$n$ board, no three in line, let us ask for the minimum that can be placed such that adding one more counter on any vacant cell will produce three in line.}'' Adena et al. searched by hand for solutions to the problem for $3 \leq n \leq 10$, obtaining configurations whose sizes are $4,4,6,6,8,8,12,12$. Surprisingly, up to $n=8$, their solutions are indeed optimal, as we will see in Section~\ref{UB}.  However, it seems that no progress has been made since then, except for the special case where lines are restricted to vertical, horizontal and $45^\circ$ diagonal lines~\cite{Cooper-et-al14}.

This minimum version might remind one less of the No-Three-In-Line Problem, which itself is based on a mathematical chess puzzle, and more of the \emph{Queens Domination Problem} that asks for a placement of five queens on a chessboard such that every square of the board is attacked by a queen. In a more general setting this problem asks for the domination number of an $n\times n$~queen's graph, a graph whose vertices are chessboard squares and whose edges represent possible queen moves~\cite{Cockayne90,Ostergard-et-al01,Watkins04}. Inspired by that, we call the smallest size of a solution for the Geometric Dominating Set Problem the \emph{geometric domination number~$\gd_n$}.

\subsection{Dominating Sets}

In the spirit of mathematical chess puzzles, the Geometric Dominating Set Problem can be formulated in two variants as

\emph{How many pawns do we have to place on a chessboard such that every square lies on a straight line defined by two pawns? How many pawns do we need if no three pawns are allowed to lie on a common line?}

We will see in Section~\ref{UB} that the answer for a chessboard is eight (for both questions), and some solutions are shown in Figure~\ref{fig:pawns}. In fact, there are $228$ possibilities to do so, and 44 if we cancel out rotation and reflection symmetries.

\begin{figure}[t]
    \centering
    \begin{minipage}{0.325\textwidth}
        \centering
        \includegraphics[width=0.9\textwidth]{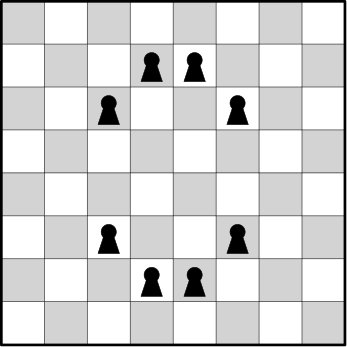}
    \end{minipage}
    \begin{minipage}{0.325\textwidth}
        \centering
        \includegraphics[width=0.9\textwidth]{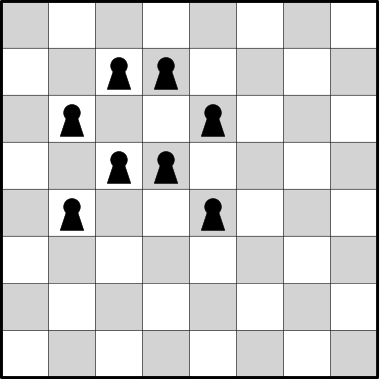}
    \end{minipage}
    \begin{minipage}{0.325\textwidth}
        \centering
        \includegraphics[width=0.9\textwidth]{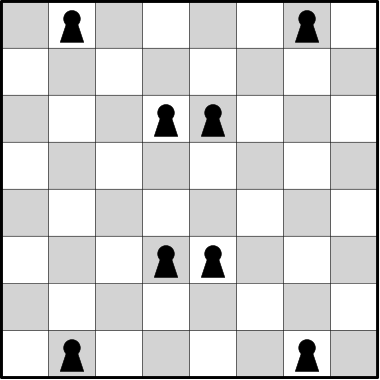}
    \end{minipage}
    \caption{Three out of 228 solutions for the $8 \times 8$ (chess) board: Every square lies on a line defined by two pawns where no three pawns are allowed to lie on a common line.}
    \label{fig:pawns}
\end{figure}
\newpage
\begin{definition}
~\\[-3ex] 
\begin{itemize}
\item Three points are called \emph{collinear} if they lie on a common straight line. Conversely, a set $S$ is called \emph{in general position} if no three points in $S$ are collinear.
\item We call a point $\pt{p}$ in the $n\times n$ grid \emph{dominated (by a set $S$)}, if $\pt{p}\in S$ or there exist $\pt{x}, \pt{y} \in S$ such that $\{\pt{x}, \pt{y}, \pt{p}\}$ are collinear. A line in the integer grid is the intersection of a straight line with the grid and we say $\pt{p}$ is dominated by a line $L$ if $\pt{p}$ is contained in~$L$. 

\item A subset $S$ of the $n\times n$ grid is called a \emph{(geometric) dominating set} or simply \emph{dominating} if every point in the grid is dominated by~$S$. 

\item We call the smallest size of a dominating set of the $n\times n$ grid the \emph{(geometric) domination number} and denote it by $\gd_n$.

\item The smallest size of a dominating set in general position (an \emph{independent dominating set}) is called the \emph{independent (geometric) domination number} and denoted by $\gdd_n$.
\end{itemize}
\end{definition}

Note that every point in an independent dominating set is only dominated by pairs that include the point itself.

\subsection{Summary Of Results}

We will show that
\begin{itemize}
    \item $\gdd_n \geq \gd_n = \Omega(n^{2/3})$ (Subsection~\ref{GDLow}, Theorem~\ref{thm:lowbound}), and
    \item $\gd_n \leq 2 \lceil n/2 \rceil$ (Subsection~\ref{UB}, Theorem~\ref{thm:upGD}).
\end{itemize}
Additionally, we will present several computational results on dominating sets of the $n\times n$ grid.
In Section \ref{sec:torus} we consider the problem on the discrete torus, where we denote the domination number by $\gd_n^T$. We prove:
\begin{itemize}
    \item For $n$ prime, $\gd_n^T = O(\sqrt{n \log n})$ and $\gd_n^T = \Omega (\sqrt{n})$.
    \item For $n = pq$, where $p$ prime and $q
    \geq 2$, we show $\gd_n^T \leq 2(p+1)$. Further, for $n$ even or divisible by 3, it holds that $\gd_{n}^T \leq 4$.
\end{itemize}

\section{Lower Bounds on $\gd_n$ and $\gdd_n$}\label{GDLow}
For a lower bound on $\gd_n$, let us consider a set $S$ of $s$ points in the $n\times n$ grid. Any pair of points in $S$ can dominate at most $n-2$ other points, so it has to hold that
${s \choose 2}(n-2)+s \geq n^2$ which is untrue when $s\le \sqrt{2n}$. Therefore, $\gd_n = \Omega(n^{1/2})$. (See \cite[Lemma 9.15]{Eppstein18}.)

However, hardly any lines in the $n\times n$ grid dominate $n$ points. In fact, we can prove a significantly better bound by using the following theorem, where $\varphi(i)$ denotes the Euler totient function, the number of positive integers less than~$i$ that are relatively prime to~$i$.

\begin{restatable}{theorem}{NumberOfDominatedPoints}\label{domptsUP}
Let $n=2k+1$ and $S$ be a subset of the $n\times n$ grid with $|S| \leq 4 \sum_{i=1}^m \varphi(i)$, where $1\leq m \leq k$. Then the number of points dominated by lines incident to a fixed point $\pt{p} \in S$ and the other points in $S$ is bounded by
\[1+8 \sum_{i=1}^{m} \left\lfloor \frac{n}{i} \right\rfloor \varphi(i) \;\leq\;\frac{48}{\pi^2}nm + O\left(n\log m\right).\]
\end{restatable}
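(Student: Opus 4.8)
The plan is to fix $\pt{p}\in S$ and bound the points dominated by lines through $\pt{p}$ and the other points of $S$ by classifying these lines according to the \emph{parameter} $i=\max(|a|,|b|)$ of their primitive direction vector $(a,b)$ (so $i=1$ corresponds to the horizontal, vertical, and the two $\pm45^\circ$ directions). First I would observe that the dominated points are exactly $\pt{p}$ together with the grid points lying on one of the at most $|S|-1$ distinct lines spanned by $\pt{p}$ and another point of $S$; since two distinct lines through $\pt{p}$ intersect only at $\pt{p}$, this number equals $1+\sum_{L}(|L|-1)$ with no overcounting, where $L$ runs over these lines and $|L|$ denotes the number of grid points on $L$.

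Two elementary facts drive the estimate. (i) A line of parameter $i$ meets the grid in at most $\lceil n/i\rceil$ points, because along the coordinate in which consecutive grid points of the line differ by $\pm i$ those coordinates form an arithmetic progression of common difference $i$ inside a window of $n$ consecutive integers; hence such a line contributes at most $\lceil n/i\rceil-1\le\lfloor n/i\rfloor$ to the sum. (ii) For every $i\ge 1$ there are exactly $4\varphi(i)$ directions of parameter $i$: up to a global sign, a primitive $(a,b)$ with $\max(|a|,|b|)=i$ has either $|a|=i$ or $|b|=i$, and for $i\ge 2$ each of these two cases yields $2\varphi(i)$ vectors (the smaller coordinate ranges over the integers in $\{1,\dots,i-1\}$ coprime to $i$, with either sign), while for $i=1$ one gets the four directions listed above. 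In particular at most $4\varphi(i)$ of the lines through $\pt{p}$ have parameter $i$.

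The main step is a rearrangement argument. Since $\lfloor n/i\rfloor$ is nonincreasing in $i$, the sum $\sum_{L}(|L|-1)$ over a family of at most $|S|-1$ distinct lines through $\pt{p}$ is maximized when as many lines as possible carry the smallest parameters. The hypothesis gives $|S|-1<4\sum_{i=1}^{m}\varphi(i)$, which is strictly less than the total number of directions of parameter at most $m$, so in the extremal configuration every line has parameter $\le m$, and therefore
\[
\sum_{L}(|L|-1)\ \le\ \sum_{i=1}^{m}4\varphi(i)\left\lfloor\frac{n}{i}\right\rfloor\ \le\ 8\sum_{i=1}^{m}\left\lfloor\frac{n}{i}\right\rfloor\varphi(i).
\]
This proves the first inequality. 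For the asymptotic bound I would use $\lfloor n/i\rfloor\le n/i$ together with the mean value of $\varphi(i)/i$: from $\varphi(i)/i=\sum_{d\mid i}\mu(d)/d$ one gets $\sum_{i\le m}\varphi(i)/i=\sum_{d\le m}\tfrac{\mu(d)}{d}\lfloor m/d\rfloor=m\sum_{d\le m}\tfrac{\mu(d)}{d^{2}}+O(\log m)=\tfrac{6}{\pi^{2}}m+O(\log m)$, using $\sum_{d\ge 1}\mu(d)/d^{2}=1/\zeta(2)$; multiplying by $8n$ yields $\tfrac{48}{\pi^{2}}nm+O(n\log m)$.

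The fiddly parts I expect are pinning down the direction count in (ii) exactly (the $i=1$ case and the coprimality conditions need care) and phrasing the rearrangement step cleanly; it should be stressed that the bound $|S|\le 4\sum_{i=1}^{m}\varphi(i)$ is used precisely so that no line of parameter exceeding $m$ is forced into the extremal family, which is where the number $m$ enters. The tail estimate in the last step is routine partial summation.
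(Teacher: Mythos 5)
Your proof is correct, and the core idea coincides with the paper's: classify the lines through $\pt{p}$ by the ``size'' $i$ of their primitive direction, note that there are $4\varphi(i)$ directions of size $i$ and that each such line meets the grid in $O(n/i)$ points, and then invoke the average order of $\varphi(i)/i$. The differences are in execution, and two of them are worth noting. First, you treat an arbitrary fixed point $\pt{p}$ directly, via the observation that a line of parameter $i$ intersects the $n\times n$ grid in at most $\lceil n/i\rceil$ points no matter where $\pt{p}$ sits; the paper instead does the count only for the center $\pt{c}_n$ (where a line of parameter $j$ carries $2\lfloor k/j\rfloor+1$ points) and then embeds a general $\pt{p}$ as the center of a $(2n+1)\times(2n+1)$ grid, which is where its factor $8$ comes from. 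Your route actually yields the stronger bound $1+4\sum_{i\le m}\lfloor n/i\rfloor\varphi(i)$, of which the stated $1+8\sum$ is a weakening. Second, you make the extremal/rearrangement step explicit (greedily filling the small-parameter direction classes, using $|S|-1<4\sum_{i\le m}\varphi(i)$ so that no line of parameter exceeding $m$ is needed), which the paper only gestures at with ``choose the $t$ lines that dominate the most points.'' Finally, you derive $\sum_{i\le m}\varphi(i)/i=\tfrac{6}{\pi^2}m+O(\log m)$ elementarily via $\varphi(i)/i=\sum_{d\mid i}\mu(d)/d$, whereas the paper cites Walfisz's sharper error term; your weaker $O(\log m)$ is entirely sufficient for the claimed $O(n\log m)$. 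All steps check out, including the direction count $4\varphi(i)$ (with the $i=1$ case handled separately) and the disjointness argument $\bigl|\bigcup_L L\bigr|=1+\sum_L(|L|-1)$ for distinct lines through $\pt{p}$.
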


The proof of this theorem requires the following well known number theoretic result.
\begin{restatable}[Arnold Walfisz~\cite{Walfisz63}]{theorem}{ThmWalfisz}\label{walf1}
    \[\sum_{i=1}^k \varphi(i) = \frac{3}{\pi^2}k^2 + O\left(k(\log k)^\frac{2}{3}(\log \log k)^\frac{4}{3}\right)\]
    \[\sum_{i=1}^m \frac{\varphi(i)}{i} = \frac{6}{\pi^2}m + O\left((\log m)^\frac{2}{3}(\log \log m)^\frac{4}{3}\right)\]
\end{restatable}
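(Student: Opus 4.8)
The plan is to reduce both asymptotic formulas to a single nontrivial estimate on a Möbius-weighted sawtooth sum, and to isolate the one genuinely deep ingredient. Writing $\mu$ for the Möbius function, I would start from the Dirichlet convolution identities $\varphi = \mu * \mathrm{id}$, equivalently $\varphi(n) = \sum_{d\mid n}\mu(d)\frac{n}{d}$ and $\frac{\varphi(n)}{n} = \sum_{d\mid n}\frac{\mu(d)}{d}$. Substituting these and switching the order of summation (summing first over the divisor $d$) turns each summatory function into a sum over $d\le x$ of $\mu(d)$ times a value of $\sum_{m\le x/d}m$ or of $\lfloor x/d\rfloor$. I would handle both formulas directly by this route rather than deriving the second from the first, since naive partial summation amplifies the error past the claimed bound, whereas the convolution approach leads both to the \emph{same} key lemma.

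For the first formula, using $\sum_{m\le t}m = \frac{t^2}{2} - t\,\psi(t) + O(1)$ with $\psi(t) = \{t\}-\frac12$ the sawtooth function and $t = k/d$, I get
\[
\sum_{i\le k}\varphi(i) = \frac{k^2}{2}\sum_{d\le k}\frac{\mu(d)}{d^2} - k\sum_{d\le k}\frac{\mu(d)}{d}\,\psi\!\left(\frac{k}{d}\right) + O(k).
\]
The leading sum extends to $\sum_{d\ge1}\mu(d)/d^2 = 1/\zeta(2) = 6/\pi^2$ with tail $O(1/k)$, producing the main term $\frac{3}{\pi^2}k^2$ and an admissible $O(k)$. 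Analogously, using $\lfloor t\rfloor = t - \psi(t) - \frac12$ gives
\[
\sum_{i\le m}\frac{\varphi(i)}{i} = m\sum_{d\le m}\frac{\mu(d)}{d^2} - \sum_{d\le m}\frac{\mu(d)}{d}\,\psi\!\left(\frac{m}{d}\right) + O(1),
\]
whose main term is $\frac{6}{\pi^2}m$. Thus both statements follow once the sawtooth sum $\Psi(x) := \sum_{d\le x}\frac{\mu(d)}{d}\psi(x/d)$ is shown to satisfy $\Psi(x) = O\big((\log x)^{2/3}(\log\log x)^{4/3}\big)$: the first formula then carries the extra factor $k$ in its error and the second inherits the error verbatim. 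I would also note that the elementary bound $|\psi|\le\frac12$ only yields $\Psi(x) = O(\log x)$, reproducing Mertens' classical $O(k\log k)$ error term, so the improvement is exactly the content of the hard step.

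The hard part, and the reason the result is attributed to Walfisz, is the saving in $\Psi(x)$. I would attack it by replacing the sawtooth by its truncated Fourier expansion, $\psi(t) = -\sum_{1\le|h|\le H}\frac{e(ht)}{2\pi i h} + O\big(\min(1,(H\|t\|)^{-1})\big)$ (with $e(u)=e^{2\pi i u}$ and $\|t\|$ the distance to the nearest integer), which converts $\Psi(x)$ into a weighted combination of Möbius exponential sums $\sum_{d\le x}\frac{\mu(d)}{d}\,e(hx/d)$ together with a controllable discrepancy term. These exponential sums cannot be bounded by elementary cancellation; their estimation requires Vinogradov's method for Weyl sums (equivalently, the Vinogradov--Korobov bounds feeding the zero-free region of $\zeta$), which is precisely the machinery developed in Walfisz's monograph. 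The delicate point will be to make the bounds uniform in $h$ and then to optimize the truncation length $H$ against the discrepancy contribution; this optimization is what manufactures the characteristic exponents $\tfrac23$ and $\tfrac43$. Since Theorem~\ref{walf1} is quoted as an external result, I would present the reduction above in full and then invoke this exponential-sum estimate as the cited input rather than reprove Vinogradov's method from scratch.
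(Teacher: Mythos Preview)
The paper does not prove Theorem~\ref{walf1} at all: it is stated as a known result with a citation to Walfisz's monograph and then used as a black box in the proof of Theorem~\ref{domptsUP}. So there is nothing in the paper to compare your proposal against, and your write-up already goes well beyond what the paper supplies.

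That said, your sketch is a correct outline of the classical argument. The convolution identities $\varphi=\mu*\mathrm{id}$ and $\varphi(n)/n=\sum_{d\mid n}\mu(d)/d$, the inversion of summation, and the representations via the sawtooth function $\psi(t)=\{t\}-\tfrac12$ are all standard and done accurately; in particular you are right that treating the two summatory functions in parallel (rather than differentiating one into the other by partial summation) is what keeps the error terms sharp. The identification of the single hard lemma
\[
\Psi(x)=\sum_{d\le x}\frac{\mu(d)}{d}\,\psi\!\left(\frac{x}{d}\right)=O\!\left((\log x)^{2/3}(\log\log x)^{4/3}\right)
\]
is exactly the content of Walfisz's theorem, and your description of its proof via truncated Fourier expansion of $\psi$ followed by Vinogradov--Korobov-type bounds on Möbius-twisted exponential sums is accurate. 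One minor point worth making explicit: in the second formula the extra term $-\tfrac12\sum_{d\le m}\mu(d)/d$ you absorb into $O(1)$ is indeed bounded, but this uses the elementary inequality $\bigl|\sum_{d\le x}\mu(d)/d\bigr|\le 1$ rather than anything trivial, so you might want to cite that as well.
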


\begin{proof}[Theorem \ref{domptsUP}]
First we compute the maximum number of dominated points by $4t$  lines incident to the point~$\pt{c}_n = (k+1, k+1)$ in the center of the $n\times n$ grid. Since the grid can be seen as the union of two rotated copies plus two mirrored and rotated copies of the grey area in the first picture of Figure~\ref{fig:lines2}, we only need to consider all lines through~$\pt{c}_n$ that are also incident to a point in the area 
\[A_n = \{(x,y) \in \{1,2,\dots,n\} \times \{1,2,\dots,n\}  \mid k+1 \leq y \leq x\}.\] 

The general idea is to choose the $t$ lines going through a point in~$A_{n}$ that dominate the most points and the main challenge is to distinguish the lines depending on the number of points that they dominate. 
With this in mind, we observe that we can identify lines with the closest dominated point $(x,y)$ to $\pt{c}_n$ (see Figure~\ref{fig:lines2} where this point is marked red). Since $(x,y)$ is closest to $\pt{c}_{n}$ it has to hold that $\gcd\left(x-(k+1),y-(k+1)\right)=1$. Therefore, $(y-(k+1))/(x-(k+1))$ is the slope of the line given as a reduced fraction.
Moreover, observe that a line that we identify with $(x,y)$, where $x-(k+1)=j$, will dominate $2 \left\lfloor  \frac{k}{j} \right\rfloor + 1$ points. Hence, there are as many lines as numbers smaller than $j$ and coprime to $j$ which dominate $2 \left\lfloor  \frac{k}{j} \right\rfloor  + 1$ points. That is, there are exactly $\varphi(j)$ lines, where $\varphi(j)$ denotes the Euler-Phi function, each dominating $2 \left\lfloor  \frac{k}{j} \right\rfloor  + 1$ points.

\begin{figure}[th]
    \centering
    \begin{minipage}{0.32\textwidth}
        \centering
        \includegraphics[width=0.9\textwidth]{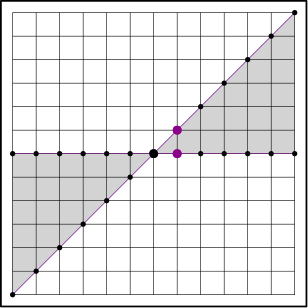}\\ \vspace{4mm}
        \includegraphics[width=0.9\textwidth]{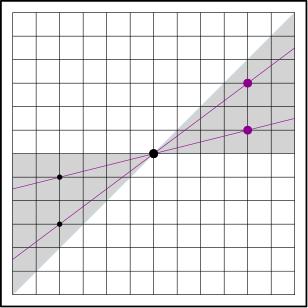} 
    \end{minipage}
    \begin{minipage}{0.32\textwidth}
        \centering
        \includegraphics[width=0.9\textwidth]{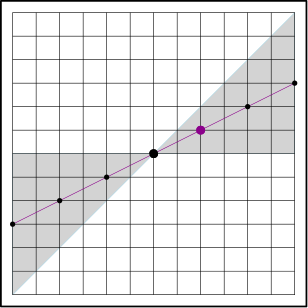} \\ \vspace{4mm}
        \includegraphics[width=0.9\textwidth]{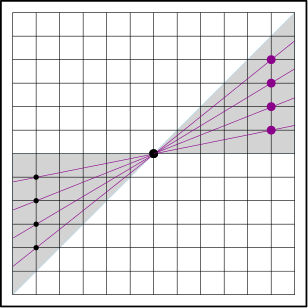} 
    \end{minipage}
    \begin{minipage}{0.32\textwidth}
        \centering
        \includegraphics[width=0.9\textwidth]{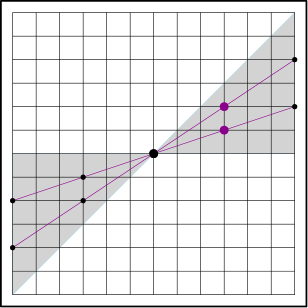} \\ \vspace{4mm}
        \includegraphics[width=0.9\textwidth]{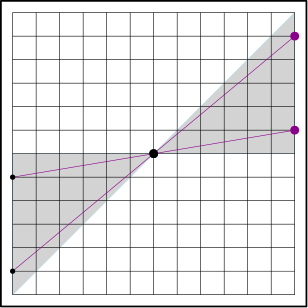} 
    \end{minipage}
    \caption{Example for $n=13$: Counting the lines incident to the center point $\pt{c}_n$, by their slope}\label{fig:lines2}
\end{figure}

Summing over all four areas, we choose $4\sum_{i=1}^{k}\varphi(i)$ lines and these lines dominate at most $1+4\sum_{i=1}^{k} 2 \left\lfloor \frac{k}{j} \right\rfloor \varphi(i)$ points, counting $\pt{c}_{n}$ only once.

Applying Theorem \ref{walf1} to our computations above, we obtain that the number of points dominated by lines incident to $\pt{c}_n$ and some point in $S$ is bounded by
\[1+8 \sum_{i=1}^{m} \left\lfloor \frac{k}{i} \right\rfloor \varphi(i) \;\leq\; 1+8 \left(\frac{n-1}{2}\right)\sum_{i=1}^{m} \frac{\varphi(i)}{i}  = \frac{24}{\pi^2}nm + O\left(n\log m\right).\]

But what if we fix a point $\pt{p}$ that is not in the center of the $n\times n$ grid?
If we consider the $(2n+1)\times (2n+1)$ grid with $\pt{p}$ at the center, then we know that the original $n\times n$ grid is a subset of this grid. Hence, the upper bound on the number of dominated points in the $(2n+1)\times (2n+1)$ grid will also be an upper bound on the number of dominated points in this subset.
\end{proof}

\begin{theorem}[A lower bound on $\gd_{n}$] \label{thm:lowbound} For $n\in \mathbb{N}$, it holds that $\gd_n = \Omega(n^{2/3})$.
\end{theorem}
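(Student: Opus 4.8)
The plan is to combine Theorem~\ref{domptsUP} with a straightforward double-counting argument. Suppose $S$ is a dominating set of the $n\times n$ grid with $|S| = s$; the goal is to show $s = \Omega(n^{2/3})$. First I would reduce to the odd case: if $n$ is even, embed the $n\times n$ grid into the $(n+1)\times(n+1)$ grid, which changes all the relevant quantities only by constant factors, so from now on assume $n = 2k+1$.

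Next, choose $m$ to be the smallest positive integer with $s \le 4\sum_{i=1}^m \varphi(i)$. By the first estimate in Theorem~\ref{walf1}, $\sum_{i=1}^m \varphi(i) = \frac{3}{\pi^2} m^2 + o(m^2)$, so this choice forces $m = \Theta(\sqrt{s})$. There is one case to dispatch: if $s$ is so large that no such $m$ satisfies $m \le k$ — that is, $s > 4\sum_{i=1}^k \varphi(i) = \Theta(n^2)$ — then $s = \Omega(n^2)$ and we are already (more than) done; otherwise $1 \le m \le k$ and Theorem~\ref{domptsUP} applies with this $m$.

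Now the core step. For each $\pt{p} \in S$, Theorem~\ref{domptsUP} bounds the number of grid points dominated by lines through $\pt{p}$ and another point of $S$ by $\frac{48}{\pi^2} n m + O(n\log m) = O(n\sqrt{s})$. Every grid point $\pt{q}$ is dominated, and either $\pt{q} \in S$ or $\pt{q}$ is collinear with some pair $\pt{x},\pt{y}\in S$, in which case $\pt{q}$ lies on a line through $\pt{x}\in S$ and another point of $S$ and is thus counted in the bound for $\pt{x}$. Hence the $n^2$ grid points are covered by $S$ together with these $s$ per-point sets, giving
\[ n^2 \;\le\; s + \sum_{\pt{p}\in S}\Bigl(\tfrac{48}{\pi^2}\,n m + O(n\log m)\Bigr) \;=\; O\!\left(n\,s^{3/2}\right).\]
Rearranging yields $s^{3/2} = \Omega(n)$, i.e.\ $s = \Omega(n^{2/3})$. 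Since $\gdd_n \ge \gd_n$, the same lower bound holds for the independent domination number.

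I expect the only real subtlety to be the bookkeeping around the choice of $m$ (checking $1 \le m \le k$, and that $m = \Theta(\sqrt{s})$ in the regime that actually matters) together with the charging argument that a dominated point always lies on a line incident to some element of $S$; the quantitative heart of the argument is already packaged in Theorem~\ref{domptsUP}, so what remains is essentially routine.
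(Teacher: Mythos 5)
Your proposal is correct and follows essentially the same route as the paper: pick $m$ so that $s \approx 4\sum_{i=1}^m\varphi(i)$ (hence $m=\Theta(\sqrt{s})$ by Theorem~\ref{walf1}), apply Theorem~\ref{domptsUP} to each point of $S$, and double-count to get $n^2 = O(n\,s^{3/2})$. The only cosmetic differences are that you dispatch the regime where $m$ would exceed $k$ directly (the paper instead invokes the a priori bound $s\le 2n$) and that you make the charging of each dominated point to some $\pt{p}\in S$ explicit, which the paper leaves implicit.
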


\begin{proof}
First, let $n=2k+1,\; k\in\mathbb{N}$ and let $S$ be a set of $s$ points in the grid, where $\sqrt{2n} \leq s\leq 2n$. (Recall that $2n$ is a trivial upper bound on $\gd_n$ and $\sqrt{2n}$ a lower bound.)
Let $m$ be such that
\[4\cdot\sum_{i=1}^{m-1} \varphi(i) < s \leq 4\cdot\sum_{i=1}^m \varphi(i)\]
Then $s = \frac{12}{\pi^2}m^2 + O(m \log m)$ by Theorem~\ref{walf1}.

By Theorem~\ref{domptsUP}, the number of points dominated by lines incident to a fixed point $\pt{p}$ and one of $s-1$ additional points is bounded by
$\frac{48}{\pi^2}nm + O\left(n\log m\right)$. To dominate all points in the grid, we thus need \[n^2 \leq s \left(\frac{48}{\pi^2}nm + O\left(n\log m\right)\right).\]

Next, we plug in the asymptotic expression for $s$, such that the inequality simplifies to
\[n^2 \leq \left(\frac{12}{\pi^2}m^2 + O\left(m\log m\right)\right) \left(\frac{48}{\pi^2}nm + O\left(n\log m\right)\right) =\frac{576}{\pi^4}nm^3 + O\left(nm^2\log m\right)\]

If we divide by $n$, we can see that $m = \Omega(n^{1/3})$ and consequently $s = \Omega(n^{2/3})$ which proves the claim for $n$ odd.

For $n$ even we embed the $n\times n$ grid into the $(n+1)\times (n+1)$ grid and obtain the same asymptotic results.
\end{proof}

\begin{corollary}
$\gdd_n = \Omega(n^{2/3})$.
\end{corollary}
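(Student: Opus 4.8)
The plan is to observe that the corollary is an immediate consequence of Theorem~\ref{thm:lowbound} together with the trivial inequality $\gdd_n \geq \gd_n$. Indeed, by definition an independent dominating set is in particular a dominating set, so the family of independent dominating sets of the $n\times n$ grid is contained in the family of all dominating sets; passing to the minimum over the smaller family can only increase (or keep) the value, which gives $\gdd_n \geq \gd_n$. Combining this with $\gd_n = \Omega(n^{2/3})$ from Theorem~\ref{thm:lowbound} yields $\gdd_n = \Omega(n^{2/3})$, as claimed.

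There is no genuine obstacle here. If one prefers a self-contained argument, note that the counting in Theorem~\ref{domptsUP} bounds the number of grid points dominated by the lines through a fixed point $\pt{p}$ and the remaining points of an \emph{arbitrary} subset $S$, and it never uses whether or not $S$ is in general position; the proof of Theorem~\ref{thm:lowbound} then goes through verbatim when $S$ is restricted to sets in general position, again giving the $\Omega(n^{2/3})$ bound. So the only thing to write down is the one-line inclusion argument above.
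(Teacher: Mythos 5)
Your proposal is correct and is exactly the paper's own argument: an independent dominating set is in particular a dominating set, so $\gdd_n \geq \gd_n$, and the bound follows from Theorem~\ref{thm:lowbound}. Nothing further is needed.
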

\begin{proof}
Since any independent dominating set is a dominating set, we have $\gdd_n \geq \gd_n$.
\end{proof}

\section{Upper Bounds on $\gd_n$}\label{UB}

\begin{theorem}\label{thm:upGD}
For $n\geq 3$, it holds that $\gd_n \leq 2 \left\lceil \frac{n}{2} \right\rceil$.
\end{theorem}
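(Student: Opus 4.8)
The plan is to exhibit an explicit dominating set of size exactly $2\lceil n/2\rceil$. For $n\geq 3$ put $k=\lfloor n/2\rfloor$ and let $I=\{\ell,\ell+1,\dots,\ell+\lceil n/2\rceil-1\}$ be a block of $\lceil n/2\rceil$ consecutive columns placed as centrally as possible (for concreteness one may take $\ell=\lfloor\lfloor n/2\rfloor/2\rfloor+1$, which keeps $I\subseteq\{1,\dots,n\}$). I claim the ``central $2\times\lceil n/2\rceil$ block''
\[
D_n \;=\; I\times\{k,k+1\}
\]
is dominating; since $|D_n|=2\lceil n/2\rceil$, this proves the theorem. (Two near-central rows and a central block of columns look the most symmetric, and this already matches the configurations of sizes $4,4,6,6,8,8$ that Adena et al.\ found for $3\le n\le 8$.)

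To show $D_n$ dominates a grid point $\pt{p}=(x,y)$ I would distinguish three cases. If $y\in\{k,k+1\}$, then $\pt{p}$ lies on the horizontal line through the $\lceil n/2\rceil\geq 2$ points of $D_n$ in row $y$. If $x\in I$, then $\pt{p}$ lies on the vertical line through $(x,k),(x,k+1)\in D_n$. The real work is the remaining case $x\notin I$, $y\notin\{k,k+1\}$: one has to produce $a,c\in I$ with $\pt{p}$ on the line through $(a,k)$ and $(c,k+1)$. That line meets the grid exactly in the points $\bigl(a+t(c-a),\,k+t\bigr)$, $t\in\ZZ$, so, writing $s=y-k$ (an integer with $s\neq 0,1$ and $|s|\leq\lceil n/2\rceil$), the requirement is
\[
x=a+s(c-a)=(1-s)a+sc,\qquad\text{i.e.}\qquad s\mid(x-a)\ \text{ and }\ c=a+\tfrac{x-a}{s}\in I .
\]
Since $|I|=\lceil n/2\rceil\geq|s|$, the block $I$ contains a complete residue system modulo $s$, so there is always at least one admissible $a$; the crux is to choose one for which the resulting $c$ also lands in $I$. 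Here the useful observation is that the affine form $f(a,c)=(1-s)a+sc$ satisfies $f(a+1,c+1)=f(a,c)+1$, so the set of values $\{f(a,c):a,c\in I\}$ is a union of arithmetic progressions that can be slid one unit at a time; combining this with the fact that $I$ also meets every residue class modulo $s-1$, one checks that $\{f(a,c):a,c\in I\}$ covers all of $\{1,\dots,n\}$, which finishes the case. Intuitively, when $|s|$ is large (so $y$ is far from the two near-central rows) the integer $(x-a)/s$ is forced to be small, hence $c$ sits right next to $a\in I$ automatically; when $|s|$ is small there is a wide range of admissible $a$; and the two long diagonals $x\pm y=\mathrm{const}$ through neighbouring points of $D_n$ already dispose of the four corners and much of the boundary.

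The step I expect to be the main obstacle is exactly this third case: turning the residue/interval bookkeeping into a single uniform argument instead of a tangle of subcases, in particular controlling the interplay between the range of $s$, the position of $x$ relative to $I$, and the precise placement of $I$. I would handle it by separating the ``$|s|$ large'' and ``$|s|$ small'' regimes, checking a short list of small cases ($n\leq 8$, say) directly, and running the even and odd cases in parallel — the odd case being in fact easier, since there $|I|=\lceil n/2\rceil=k+1$ provides an extra column of slack.
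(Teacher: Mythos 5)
Your construction is, up to swapping the roles of rows and columns and a reflection, exactly the set used in the paper (two adjacent central lines, each carrying $\lceil n/2\rceil$ consecutive grid points), and your reduction of the remaining case to the Diophantine condition $x=(1-s)a+sc$ with $a,c\in I$ is the right one. But the proposal has a genuine gap at exactly the point you flag yourself: the claim that $\{(1-s)a+sc : a,c\in I\}$ covers $\{1,\dots,n\}$ for every admissible $s$ is asserted, not proved, and it is the entire content of the theorem --- everything before it is routine. The paper even notes that the corresponding inequality in its own argument is tight (``this case is tight, which is why we cannot extend the grid''), so this is not a step that can be waved through; an off-by-one in the placement of $I$ would break it. Moreover, the two facts you offer in support do not obviously combine into a proof: modulo $s$ the form $(1-s)a+sc$ reduces to $a$ and modulo $s-1$ it reduces to $c$, but the obstruction is not a residue condition on $c$, it is an interval condition ($c\in I$), and the Chinese-remainder modulus $s(s-1)$ can exceed $n$. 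So ``combining'' the two residue observations does not finish the case.

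The concrete statement you still owe is the following. For fixed $s$ and fixed $d=c-a$, the reachable abscissae form the interval $\{\ell+sd,\dots,\ell+sd+|I|-1-d\}$ when $d\geq 0$ (and symmetrically for $d<0$), of length $|I|-|d|$; you must show that as $d$ varies these intervals, together with $I$ itself, leave no gaps in $\{1,\dots,n\}$, i.e.\ that consecutive intervals abut, which amounts to the inequality $|s|\leq |I|-|d|$ along the whole chain out to $x=1$ and $x=n$. This is precisely the computation the paper carries out, in transposed form: it fixes the target point, takes the \emph{smallest} slope $t$ for which the line through the point first enters the index range of the set on the nearer of the two central lines, and then verifies in two explicit cases ($\Delta\geq\lceil k/2\rceil$, forcing $t=1$, and $\Delta<\lceil k/2\rceil$, giving $t\leq(\lceil k/2\rceil-x_2)/\Delta+1$) that the intersection with the farther line does not overshoot the other end of the range. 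Until you supply the analogous two-case (or interval-chaining) verification for your placement $\ell=\lfloor\lfloor n/2\rfloor/2\rfloor+1$, the proof is a correct plan rather than a proof.
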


\begin{figure}[ht]
    \centering
        \includegraphics[width=0.4\textwidth]{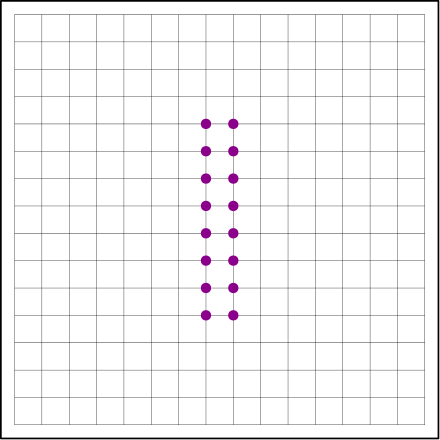}\hspace{10mm}
        \includegraphics[width=0.4\textwidth]{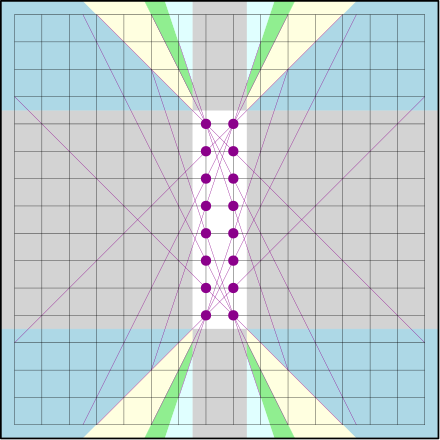}
    \caption{Dominating set construction for $n=16$.}
    \label{Fig:Upbound}
\end{figure}
\begin{proof}
First, let us consider $n = 2k$. We denote the range from $1$ to $n$ by $[n] = \{1,2,3,\dots, n\}$. The idea is to choose $k$ points on each of the two vertical lines in the middle as dominating sets (see Figure~\ref{Fig:Upbound}). That is, 
\[S = \{(i,j) \mid k \leq i \leq k+1,\; \lceil k/2 \rceil+1 \leq j \leq k + \lceil k/2 \rceil \}.\]
Obviously, all $\pt{x}=(x_1,x_2)$ with $x_1 \in \{k,k+1\}$ are dominated by a vertical line and those with $\lceil k/2 \rceil+1 \leq x_2 \leq k + \lceil k/2 \rceil$ are dominated by a horizontal line.
By symmetry it is sufficient to prove that points in the lower left rectangle $[k-1]\times[\,\lceil k/2 \rceil\,]$ are dominated.

So let $\pt{x} = (x_1,x_2) \in [k-1]\times[\,\lceil k/2 \rceil\,]$ and $\Delta = k-x_1$. We will now try to find the line with the smallest slope $t$ that is incident to $\pt{x}$ and two points in $S$. In Figure \ref{Fig:Upbound}, all points that are dominated by a line with the same smallest slope, lie in the same colored area.

Let $t$ be the smallest positive integer such that $x_2+t\, \Delta > \lceil k/2 \rceil$. Since $1 \leq \Delta \leq k-1$, this integer $t$ is well defined and in the range from $1$ to $k$. This $t$ is in fact the smallest slope of a line that dominates $\pt{x}$. What is left to show is, that the points $(k, x_2+t \,\Delta)$ and $(k+1,x_2+t\, (\Delta+1))$ are in $S$. This is the case if $x_2+t\, (\Delta+1) \leq k + \lceil k/2 \rceil$. We consider two cases:
\begin{enumerate}
    \item $\Delta = k-x_1 \geq \lceil k/2 \rceil$. Then $t=1$ and $x_2+\Delta+1 \leq k + \lceil k/2 \rceil$, since $\Delta \leq k-1$ and $x_2 \leq \lceil k/2 \rceil$. (Note that this case is tight, which is why we cannot extend the grid).
    \item $\Delta < \lceil k/2 \rceil$. Since $t$ is the smallest integer such that $t > (\lceil k/2 \rceil - x_2)/\Delta $, we know that $t \leq (\lceil k/2 \rceil - x_2)/\Delta +1$ and obtain  
    \begin{align*} x_2+t\, (\Delta+1) &\leq x_2 + \left( \frac{\lceil k/2 \rceil - x_2}{\Delta} +1 \right)(\Delta+1) \\
    &= \lceil k/2 \rceil + \frac{\lceil k/2 \rceil - x_2}{\Delta}+(\Delta+1) \\
    &\leq \lceil k/2 \rceil + \frac{\lceil k/2 \rceil - 1}{1}+(\lceil k/2 \rceil-1+1) \\
    &= 3\lceil k/2 \rceil-1 < k + \lceil k/2 \rceil + 1
    \end{align*}
\end{enumerate}
Thus, $(k,x_2+t\, \Delta)$ and $(k+1,x_2+t\, (\Delta+1))$ are indeed in $S$ and $x$ is dominated by the line that is defined by the two points.

Finally, if $n = 2k-1$, we can embed the $n \times n$ grid in the $2k \times 2k$ grid and obtain the desired upper bound.
\end{proof}

If $n=k^2$ is an odd square the result can be slightly improved to $n-1$ by a construction similar to the one depicted for $k=3$ and $n=9$ in the leftmost drawing of Figure~\ref{fig:n0910collinearexterior}.

So far, for $\gdd_n$ there is no better upper bound known than the obvious $2n$. 

\section{Small Cardinalities and Examples}

\begin{figure}[ht]
	\centering{
	\includegraphics[page=1,scale=0.4]{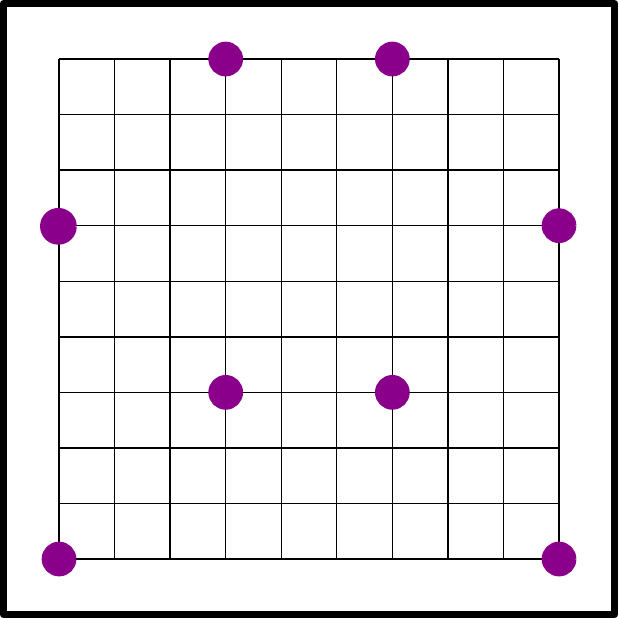}\hspace{2mm}
	\includegraphics[page=1,scale=0.4]{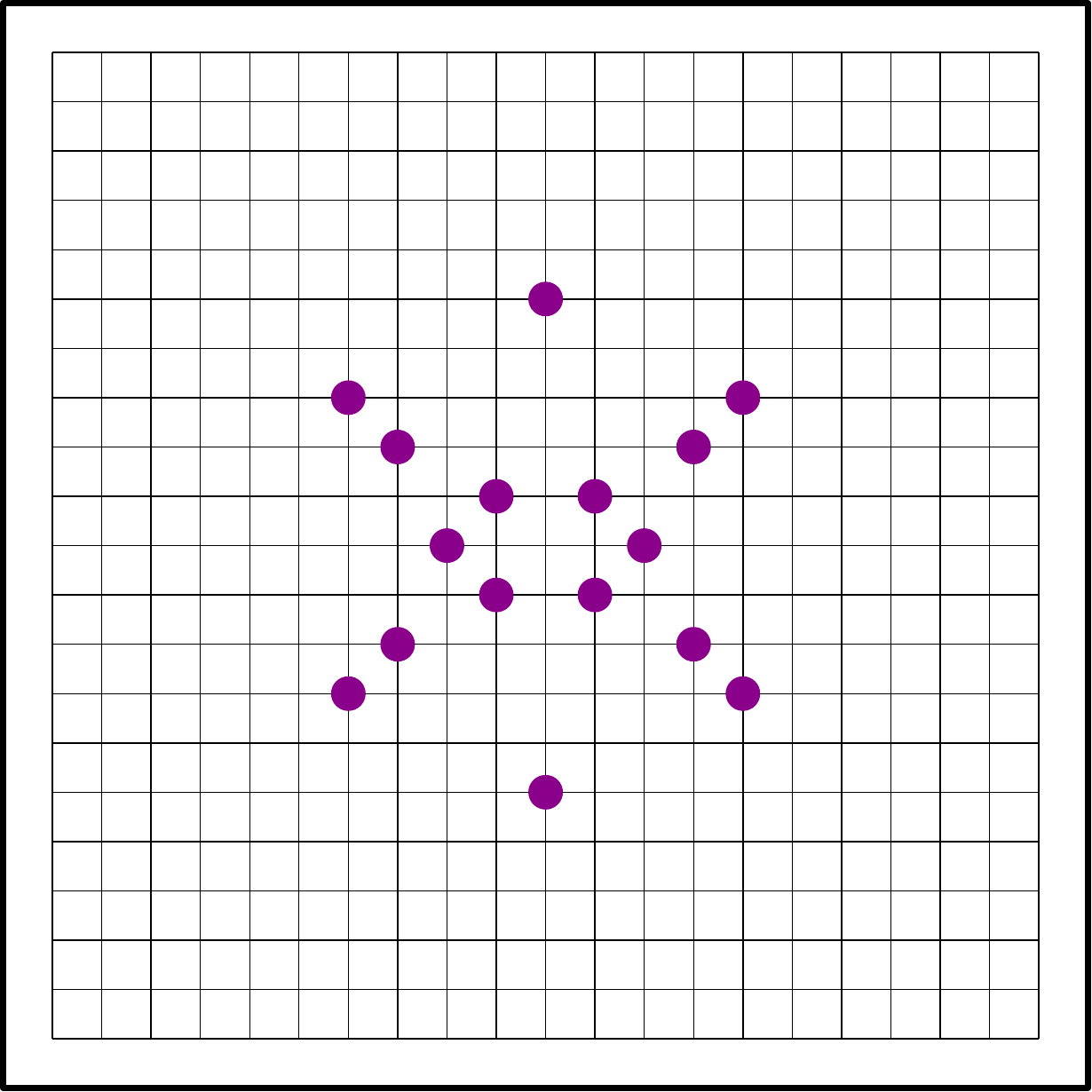}
	}
     \caption{\label{fig:examples1021} The unique (up to symmetry) minimal independent dominating set of size 8 for the $10 \times 10$ board and a small independent dominating set of size 16 for the $21 \times 21$ board. The latter gives the currently best known ratio (number of points / grid size) of $16/21<0,762$.}
\end{figure}

To obtain results for small grids we developed a simple search algorithm based on the classic backtracking approach. To speed up the computation, both symmetries -- rotation and reflection -- were taken into account. For $n=2,\ldots ,12$ we made an exhaustive enumeration of all independent dominating sets. The obtained results are summarized in Table~\ref{tab:soltable}. For larger sets upper bounds on $\gdd_n$ are given in Table~\ref{tab:largersets}.
Figure~\ref{fig:examples1021} gives two examples of small independent dominating sets.

\begin{table}[htb]
\begin{center}
 \begin{tabular}{ r | r | r | r | r | r | r | r | r | r | r | r }
 \hline
 $n$ & 2 & 3 & 4 & 5 & 6 & 7 & 8 & 9 & 10 & 11 & 12 \\
 \hline \hline
 $\gdd_n$ & 4 & 4 & 4 & 6 &  6 &  8 & 8 & 8 & 8 & 10 &  10 \\
 \hline
 diff. sets & 1 & 5 & 2 & 152 & 8 & 4136 & 228 & 11 & 4 & 108 & 12 \\
 \hline
 non sym. sets & 1 & 2 & 2 & 26 & 2 & 573 & 44 & 3 & 1 & 19 & 2 \\ 
 \hline
\end{tabular}
\caption{Size $\gdd_n$ of smallest independent dominating sets for $n=2,\ldots ,12$ and the number of different such sets. The last row shows the number of different sets if we consider symmetry by rotation and/or reflection.}
\label{tab:soltable}
\end{center}
\end{table}

\begin{table}[htb]
\setlength{\tabcolsep}{4.5pt}
\begin{center}
 \begin{tabular}{ r | r | r | r | r | r | r | r | r | r | r | r | r | r | r | r | r | r | r }
 \hline
 $n$ & 13 & 14 & 15 & 16 & 17 & 18 & 19 & 20 & 21 & 22 & 23 & 24 & 25 & 26 & 27 & 28 & 29 & 30 \\
 \hline \hline
 $\gdd_n \leq $ & 12 & 12 & 14 & 14 & 15 & 16 & 16 & 16 & 16 & 18 & 20 & 20 & 22 & 24 & 24 & 24 & 24 & 25 \\
 \hline
\end{tabular}
\caption{Currently best upper bounds for smallest independent dominating sets for $n>12$.}
\label{tab:largersets}
\end{center}
\end{table}

\begin{figure}[h!]
	\centering{
	\includegraphics[page=1,scale=0.2]{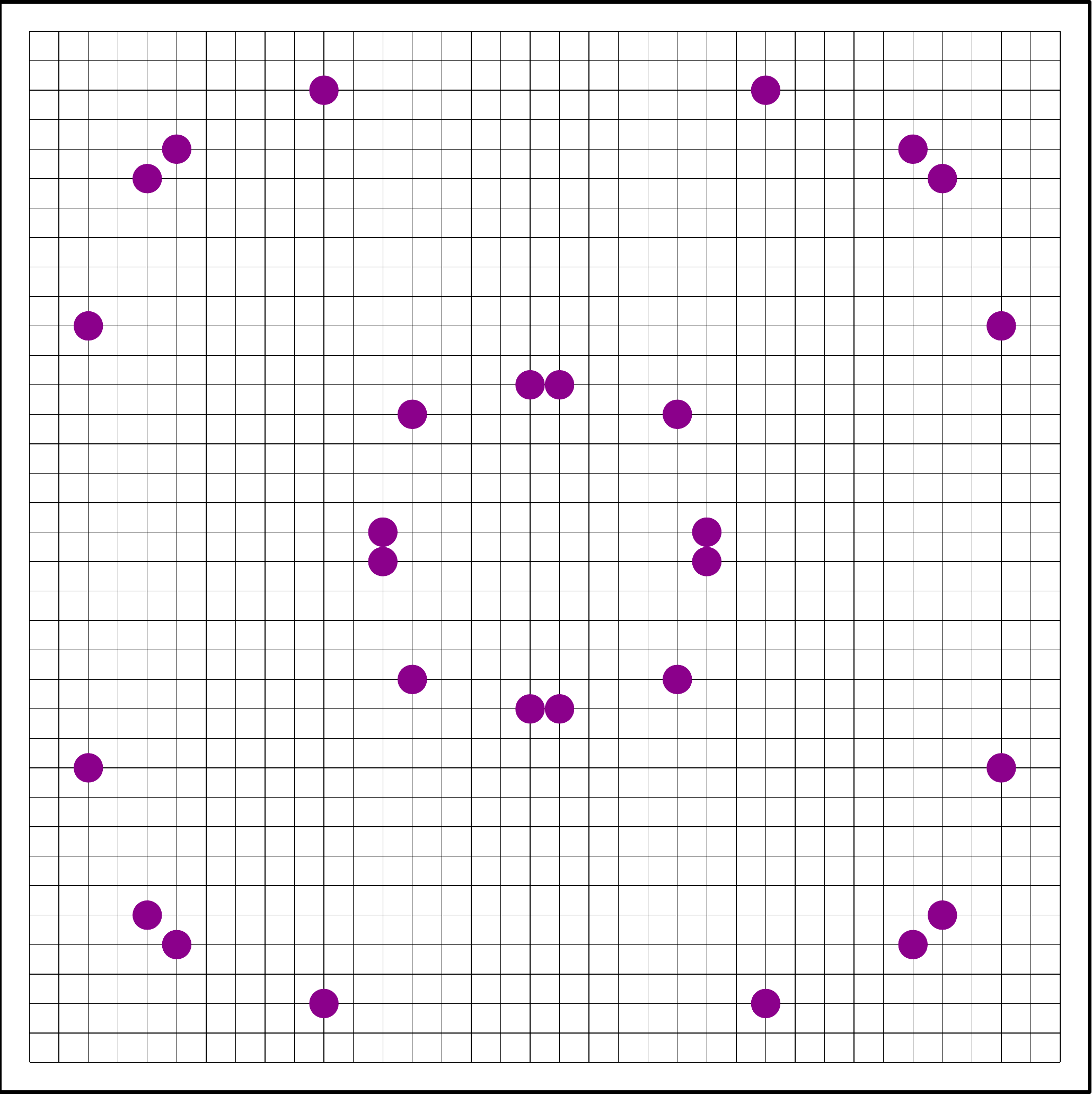}\hspace{2mm}
	\includegraphics[page=2,scale=0.2]{n36_k28.pdf}\hspace{2mm}
	\includegraphics[page=3,scale=0.2]{n36_k28.pdf}
	}
     \caption{\label{fig:large} Three independent dominating sets  of cardinality 28 for a $36 \times 36$ board.}
\end{figure}

We also obtained results for larger sets, but there is no evidence that our sets are (near the) optimal solutions. Most of these examples are rather symmetric, but that might be biased due to the approach we used to generate larger sets from smaller sets by adding symmetric groups of points. Figure~\ref{fig:large} shows three, kind of aesthetically appealing, drawings for $n=36$ with independent dominating sets of size 28.

\begin{figure}[ht]
	\centering{
	\includegraphics[page=1,scale=0.5]{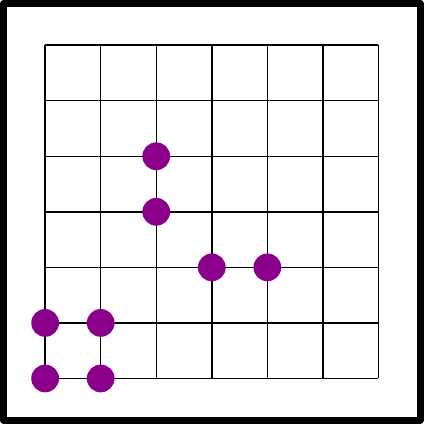}\hspace{2mm}
	\includegraphics[page=2,scale=0.5]{n07_k0708.pdf}\hspace{2mm}
	\includegraphics[page=3,scale=0.5]{n07_k0708.pdf}\hspace{2mm}
	\includegraphics[page=4,scale=0.5]{n07_k0708.pdf}\hspace{2mm}
	\includegraphics[page=5,scale=0.5]{n07_k0708.pdf}
	}
     \caption{\label{fig:collinear} Five different dominating sets for a $7 \times 7$-board. The first two sets are in general position and have size 8, while the remaining three sets have size 7 but contain collinear points.}
\end{figure}

Figure~\ref{fig:collinear} shows different dominating sets for $n=7$. The best dominating sets that contain collinear points are smaller than the best solutions in general position. For $n \leq 12$ this is the only board size where allowing collinear points leads to smaller dominating sets. Figure~\ref{fig:n0910collinearexterior}(above) shows  some nicely symmetric dominating sets with collinear points.

\begin{figure}[ht]
	\centering{
	\includegraphics[page=1,scale=0.45]{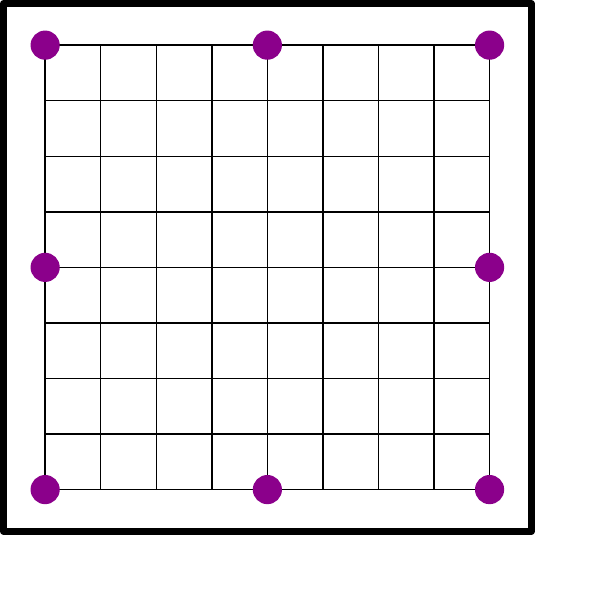}
	\includegraphics[page=2,scale=0.45]{n0910_collinear.pdf}
	\includegraphics[page=3,scale=0.45]{n0910_collinear.pdf}\hspace{12mm}
	\includegraphics[page=1,scale=0.45]{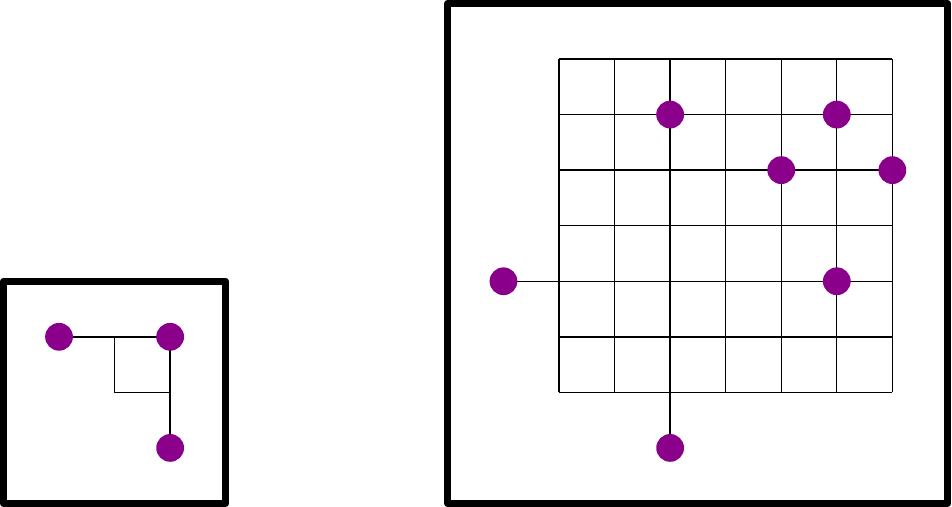}
	}
     \caption{\label{fig:n0910collinearexterior} Above: Symmetric dominating sets with collinear points for $n=9$ and $n=10$. Below: Smaller dominating sets for a $2 \times 2$-board and a $7 \times 7$-board if points are allowed to be outside of the board. These solutions are unique up to symmetry.}
\end{figure}

We can also release the restriction that the points of the dominating set have to lie within the grid, that is, the points can have coordinates smaller than one, or larger than $n$. In Figure~\ref{fig:n0910collinearexterior}(below) we depict two examples where the shown dominating sets are smaller than the best bounded solutions in general position. So far we have not been able to find any examples where this idea combined with collinear points in the dominating set provided even better solutions.

\section{Dominating sets of the Discrete Torus}\label{sec:torus}

In this section we consider the geometric dominating set problem on the discrete $n\times n$ torus. 

\begin{definition}
 We identify the discrete $n\times n$ torus $T_n$ with $\{0,1,2,\dots,n-1\}\times\{0,1,2,\dots,n-1\}$ and define lines on the torus to be images of lines in $\ZZ \times \ZZ$ under the projection $\pi_n : \ZZ \times \ZZ \rightarrow [n]^2$ defined as follows
\[\pi_n(x_1, x_2) := (x_1 \Mod n, x_2 \Mod n)\]
where $a \Mod b$ means the smallest non-negative remainder when $a$ is divided by~$b$. \\
A set of points $S$ on the torus is collinear, if it is contained in a line on the torus. A (geometric) dominating set of the torus is analogously defined as in the plane grid and we denote the domination number of $T_n$ by $\gd_n^T$.
\end{definition}

Note that the discrete torus is usually defined as the set \mbox{$\ZZ \slash n\ZZ \times \ZZ \slash n\ZZ$}, which is a group and the lines on the torus are cosets of maximal cyclic subgroups. However, we prefer the projection to the smallest non-negative remainder as representative and working with integer coordinates.\\

In Figure~\ref{fig:toruslin}, we can see that the line incident to $(0,0)$ with slope~$\frac{1}{2}$ in the $n\times n$ grid "wraps around the torus" such that it is incident to the $13$ black points. For instance, the line incident to $(1,7)$ with slope~$7$ now projects to exactly the same line.

\begin{figure}[h]
    \centering
    \includegraphics[width=0.65\textwidth]{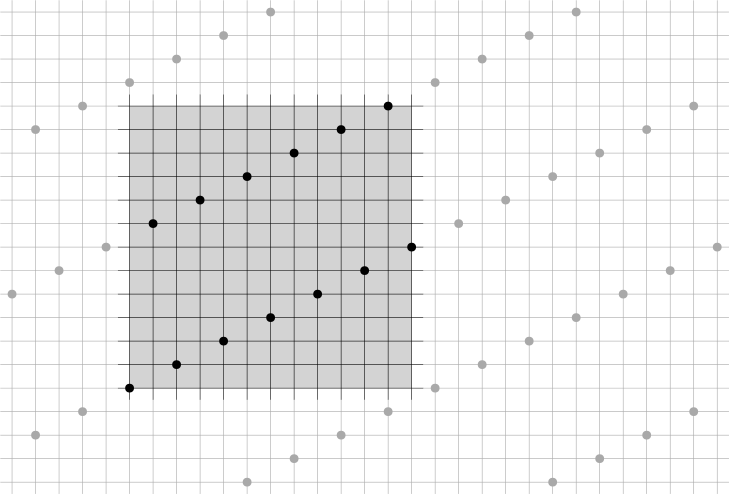}
    \caption{A sample line in black on the $13\times 13$ torus}
    \label{fig:toruslin}
\end{figure}

\subsection{Lower Bounds} \label{sec:torlow}
Roughly ten years ago, Fowler, Groot, Pandya and Snapp were the first to consider the No-Three-In-Line Problem on the $n\times n$ torus and proved for $n$ prime that the maximal size of a point set in general position is $n+1$ by giving an explicit construction and proving its maximality~\cite{Fowler-et-al12}.
In particular, they showed in the proof of their Theorem 2.3 that every point on the $n\times n$ torus is incident to exactly $n+1$ distinct lines if $n$ is prime and they explicitly identified the \emph{generators} $G = \{(0,1), (1,0), (1,1), (1,2),\dots (1,n-1)\}$ of the lines incident to $(0,0)$. While they used the term generators in the algebraic sense, we think about these generators in a similar way as we thought about the closest point to $\pt{c}_n$ on a line in $[n]^2$ in the proof of Theorem \ref{domptsUP}. These generators uniquely define a line incident to $(0,0)$ and every point on the line generated by $(x_1,x_2)$ is of the form 
\[\pi_{n}(tx_1,tx_2) \text{, where } 1\leq t \leq n.\]

Since $n$ is prime, we know that for $1\leq t_1 < t_2 \leq n$ and $x \in [n-1]$ it holds that
\[t_1x \text{ mod } n \neq t_2 x \text{ mod } n.\] 
It follows for any $(x_1,x_2) \in G$ that 
\[\pi_{n}(t_1x_1,t_1x_2) \neq \pi_{n}(t_2x_1,t_2x_2).\]
Thus, every line is incident to exactly $n$ points.

Even though it seems obvious, the property that the intersection of two lines contains at most one point is due to the fact that $n$ is prime and can be easily verified by counting the points on the torus line by line incident to $(0,0)$. 
However, if we consider the lines generated by $(1,1)$ and $(1,3)$ on the $4\times 4$ torus, we can see that they intersect not only in $(0,0)$ but also in $(2,2)$.
In fact, we will later prove that for the $2^n \times 2^n$ torus, the domination number is at most $4$.\\

This is why we repeat the proof of \cite[Lemma 9.15]{Eppstein18} here to make clear that the lower bound on $\gd_{n}^{T}$ still holds for the torus if $n$ is prime.

\begin{theorem}\label{thm:tordown}
    For $n$ prime, the domination number of $T_n$ is $\gd_n^T = \Omega(\sqrt{n})$
\end{theorem}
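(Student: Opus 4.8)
The plan is to adapt the elementary double-counting argument used for the plane grid (the proof of \cite[Lemma 9.15]{Eppstein18} recalled above) to the torus, replacing ``a line meets the grid in at most $n$ points'' by the sharper torus facts available when $n$ is prime. Concretely, I would rely on two ingredients already established via Fowler et al.~\cite{Fowler-et-al12} and recalled in Section~\ref{sec:torlow}: when $n$ is prime, every line on $T_n$ is incident to exactly $n$ points, and any two distinct lines on $T_n$ share at most one point. The second fact immediately implies that two distinct points of $T_n$ lie on at most one common line.

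First I would fix an arbitrary dominating set $S \subseteq T_n$ and write $s = |S|$. Let $\mathcal{L}$ be the set of lines of $T_n$ that contain at least two points of $S$. Since every pair of distinct points of $S$ lies on at most one line of $\mathcal{L}$, and each line in $\mathcal{L}$ contains at least one such pair, we get $|\mathcal{L}| \le \binom{s}{2}$. By the definition of domination, every point of $T_n$ either belongs to $S$ or lies on some line of $\mathcal{L}$, so
\[
n^2 = |T_n| \;\le\; s + \sum_{L \in \mathcal{L}} |L| \;=\; s + n\,|\mathcal{L}| \;\le\; s + n\binom{s}{2}.
\]
(One could replace the count $|L| = n$ by $n-2$, not recounting the spanning pair, but this does not change the asymptotics.)

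Finally I would solve this inequality for $s$: from $n^2 \le s + \tfrac{1}{2} n\,s(s-1) \le s + \tfrac{1}{2} n\,s^2$ one obtains, for $n$ large, $s \ge (\sqrt{2}-o(1))\sqrt{n}$, and in particular $s = \Omega(\sqrt{n})$; since $S$ was an arbitrary dominating set, this yields $\gd_n^T = \Omega(\sqrt{n})$. I do not expect a genuine obstacle here; the only point requiring care is that the two torus-specific counting facts are used correctly, and these are exactly where primality enters. Indeed the argument fails for composite $n$, since lines can then be strictly shorter than $n$ and two points can share several lines (as the $(1,1)$ versus $(1,3)$ lines on the $4\times 4$ torus illustrate) — consistent with the constant upper bounds established later for $n$ even or a multiple of $3$.
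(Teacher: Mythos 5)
Your proposal is correct and follows essentially the same double-counting argument as the paper: both use the fact that for $n$ prime any two points of $T_n$ determine a unique common line with $n$ points, leading to the inequality $\binom{s}{2}(n-2)+s\geq n^2$ and hence $s=\Omega(\sqrt{n})$. Your write-up is just a slightly more explicit version of the paper's proof.
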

\begin{proof}
    If $n$ is prime, any two distinct points of $T_n$ define exactly one line that dominates $n-2$ other points. If $S$ is a dominating set of size $s$, it therefore has to hold that 
    \[\binom{s}{2}(n-2)+s \geq n^2\]
    which implies that $s > \sqrt{2n}$.
\end{proof}

\subsection{Upper Bounds}
By extending the probabilistic approach of Guy and Kelly to the No-3-In-Line problem~\cite{Guy-et-al68} in the plane, we show an upper bound of $O(\sqrt{n\log n})$, which remarkably is below the lower bound of the regular grid. If $n$ is not prime, we can do even better.

In \cite{Guy-et-al68}, Guy and Kelly calculated the probability that three uniformly at random chosen points in $[n]^2$ lie on a line and used the fact that under the assumption that the events that three points of $\alpha n$ random points of $[n]^2$ are in general position are mutually independent, the probability that $\alpha n$ random points are in general position is just the product over the probabilities of the single events. That is, the probability would be
\begin{equation*}
\left(1-\frac{18\log n}{\pi^2n^2} +
O\left(\frac{1}{n^2}\right)\right)^{\binom{\alpha n}{3}} = \exp\left(-\frac{3\alpha ^3}{\pi^2}n\log n + O(n)\right).
\end{equation*}
If this would be the correct probability that a random point set of size $\alpha n$ is in general position, then the number of solutions to the No-Three-In-Line Problem would have to be
\begin{equation*}
\binom{n^ 2}{\alpha n}\exp\left(-\frac{3\alpha ^3}{\pi^2}n\log n + O(n)\right) = O\left(n^{\alpha -\frac{3\alpha ^3}{\pi^2}n}c^n\right)
\end{equation*}
where $c$ is a large enough constant and which converges to $0$ as $n\rightarrow \infty$ if $\alpha  > \left(\pi^2/3\right)^{1/2} \approx 1.813799$.\footnote{In fact, they obtained a slightly different result due to a computational error. Gabor Ellman pointed out the mistake to Richard Guy in a personal conversation and Ed Pegg published the correct number on his website \cite{Pegg05}.}

Now, it is easy to see that the assumption that the events that three points of $\alpha n$ random points in the $n\times n$ grid are in general position are mutually independent is wrong. 
If the points  $\pt{p_1},\pt{p_2},\pt{p_3}$ are collinear and $\pt{p_1},\pt{p_2},\pt{p_4}$ are collinear the event that $\pt{p_1},\pt{p_3},\pt{p_4}$ are collinear is fully determined.
However, even though the assumption is wrong on which they base their calculations to state the conjecture, the result might asymptotically still hold if the dependence of the events is small enough.

In the following, we will briefly review some concepts and tools that have been developed mostly for random graphs in the last 50 years that will help us to properly consider the mutual dependence of certain events concerning our problem.

The key observation is that a random graph on $n$ vertices is nothing else but a random subset chosen according to some distribution of a given set, namely the edge set of the complete graph on $n$ vertices. So the classic binomial and uniform random graph models (see for example \cite{FriezeKaronski16}) and their relation to each other translate directly to respective random (point) set models. We will only state the relevant definitions and facts here. All proofs are completely analogous to the respective proofs for random graphs. For a more detailed account, we refer to \cite{Hainzl20}.

We start by defining uniform random subsets, the type of random subsets Guy and Kelly used in their work, where they choose a random subset of a fixed size uniformly from all subsets of the same size.

\begin{definition}[Random subset models]
Let $\RP_{n,m}$ be the family of all subsets of size~$m$ of a set $S$ with $|S| = n^{2}$. 
To every set $R \in \RP_{n,m}$ we assign the probability
\[\PP(R) = \binom{n^2}{m}^{-1}\]
We denote a random subset chosen according to this distribution by~$R_{n,m}$ and call it a \emph{uniform random subset}.

For the second model, fix $0 \leq p \leq 1$. Then for $0\leq m \leq n^2$, assign to each subset $R$ with $m$ points the probability
\[\PP(R) = p^m(1 - p)^{n^2 - m}\]
We call such a random subset a \emph{binomial random subset} and denote it by $R_{n,p}$.
\end{definition}

The binomial random subset is therefore just a subset, where each element is chosen independently of all others with probability $p$.

Just like the respective random graph models, the random subset models relate to each other such that $R_{n,p}$ conditioned on the event that it contains $m$ elements is distributed like $R_{n,m}$. The proof is completely analogous as in the case of random graphs.

Another helpful concept that we will use are monotone properties. 

\begin{definition}[Monotone properties]
We call a property of a subset $R$ \emph{monotone increasing} if adding an arbitrary element to $R$ does not destroy the property. Conversely, a property is \emph{monotone decreasing} if removing an arbitrary element from $R$ does not destroy the property. 
\end{definition}

By analogous arguments as in~\cite[p.~5ff]{FriezeKaronski16}, one can formally prove what intuitively seems clear: If $P$ is a monotone increasing property, then the probability that a uniform random point set of size~$m'$ is equally or more likely to have the property than a uniform random point set of size $m \leq m'$. The same holds for $R_{n,p}$ and $R_{n,p'}$ if $p \leq p'$. 

\begin{lemma} If $P$ is a monotone increasing property, $p\leq p'$ and $m \leq m'$, then
\[\PP(R_{n,p} \in P) \leq \PP(R_{n,p'} \in P) \quad \text{and} \quad  \PP(R_{n,m} \in P) \leq \PP(R_{n,m'} \in P) \quad \text{respectively} \]
The converse holds for monotone decreasing properties.
\end{lemma}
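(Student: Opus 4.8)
The plan is to prove the two inequalities by the standard coupling argument used for random graphs, adapted to random subsets of a finite ground set $S$ with $|S| = n^2$. I treat the binomial case first, then derive the uniform case by conditioning on the size.

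\textbf{Binomial case.} Fix $p \le p'$. I would construct a coupling of $R_{n,p}$ and $R_{n,p'}$ on a common probability space so that $R_{n,p} \subseteq R_{n,p'}$ almost surely. To do this, for each element $e \in S$ draw an independent uniform random variable $U_e$ on $[0,1]$, and set $R_{n,p} = \{e : U_e \le p\}$ and $R_{n,p'} = \{e : U_e \le p'\}$. Since $p \le p'$, every element of $R_{n,p}$ is an element of $R_{n,p'}$, so $R_{n,p} \subseteq R_{n,p'}$. A quick check shows each of these has the correct marginal distribution: each element is included independently with the right probability. Now if $P$ is monotone increasing, then $R_{n,p} \in P$ implies $R_{n,p'} \in P$ (because $R_{n,p'}$ is obtained from $R_{n,p}$ by adding elements, which cannot destroy the property). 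Hence the event $\{R_{n,p} \in P\}$ is contained in the event $\{R_{n,p'} \in P\}$ on this coupled space, giving $\PP(R_{n,p} \in P) \le \PP(R_{n,p'} \in P)$. For a monotone decreasing property the inclusion of events reverses, yielding the converse.

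\textbf{Uniform case.} Fix $m \le m'$. The cleanest route is again a coupling: I would generate $R_{n,m'}$ as a uniform random $m'$-subset, then obtain $R_{n,m}$ by deleting $m' - m$ elements chosen uniformly at random (without replacement) from $R_{n,m'}$. A short symmetry argument shows the resulting $R_{n,m}$ is a uniform random $m$-subset, and by construction $R_{n,m} \subseteq R_{n,m'}$. As before, if $P$ is monotone increasing then $R_{n,m} \in P$ implies $R_{n,m'} \in P$, so $\PP(R_{n,m} \in P) \le \PP(R_{n,m'} \in P)$; the decreasing case is symmetric. Alternatively one can go through the binomial model: condition $R_{n,p}$ on its cardinality to get $R_{n,m}$ (as noted in the excerpt), and combine with the monotonicity in $p$ together with a careful handling of the conditioning — but the direct deletion coupling avoids the technical bookkeeping.

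\textbf{Main obstacle.} The argument is essentially routine once the right coupling is written down; the only place requiring a little care is verifying that the marginals of the coupled variables are exactly the claimed distributions — in the uniform case, checking that deleting a uniformly random $(m'-m)$-subset from a uniform $m'$-subset yields a uniform $m$-subset. This follows from a counting identity, since each $m$-subset arises from the same number of $(m',\text{deletion set})$ pairs. I would simply remark that this is a standard fact and that all arguments are verbatim the random-graph proofs in~\cite[p.~5ff]{FriezeKaronski16}, which is the level of detail appropriate here.
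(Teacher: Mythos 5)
Your argument is correct and coincides with the approach the paper intends: the paper states this lemma without its own proof, deferring to the analogous random-graph arguments in Frieze and Karo\'nski, and those are exactly the two standard couplings you construct (the common uniform-threshold coupling for the binomial model and the random-deletion coupling for the uniform model), both of which you correctly note have the required marginals. Nothing is missing.
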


For monotone properties, the two models satisfy another useful relation, since in a lot of cases it is easier to deal with the binomial model than with the uniform one.

\begin{lemma}\label{lem:equiv}
Let $P$ be a monotone point set property and let $m,n \in \NN$. If we let $p=m/n^2$, then
\[\PP(R_{n,m} \in P) \leq  2 \PP(R_{n,p}  \in P)\]
\end{lemma}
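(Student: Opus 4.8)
The plan is to mimic the standard argument relating the binomial and uniform random (sub)graph models, adapted to random point sets. The key point is that for a monotone property $P$, the conditional probability $\PP(R_{n,p}\in P \mid |R_{n,p}|=\ell)$ is monotone in $\ell$: conditioning on a fixed size $\ell$ gives exactly the distribution $R_{n,\ell}$, and by the previous lemma these conditional probabilities are monotone increasing in $\ell$ when $P$ is monotone increasing (and decreasing when $P$ is monotone decreasing). Assume without loss of generality that $P$ is monotone increasing; the decreasing case is symmetric, replacing $\ell \geq m$ by $\ell \leq m$ throughout.

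First I would write
\[
\PP(R_{n,p}\in P) \;=\; \sum_{\ell=0}^{n^2} \PP\bigl(|R_{n,p}|=\ell\bigr)\,\PP\bigl(R_{n,\ell}\in P\bigr)
\;\geq\; \sum_{\ell=m}^{n^2} \PP\bigl(|R_{n,p}|=\ell\bigr)\,\PP\bigl(R_{n,\ell}\in P\bigr),
\]
and then, using monotonicity $\PP(R_{n,\ell}\in P)\geq \PP(R_{n,m}\in P)$ for every $\ell\geq m$, bound this below by
\[
\PP(R_{n,m}\in P)\cdot \PP\bigl(|R_{n,p}|\geq m\bigr).
\]
So it remains to show $\PP(|R_{n,p}|\geq m)\geq \tfrac12$ when $p=m/n^2$, since then $\PP(R_{n,p}\in P)\geq \tfrac12\PP(R_{n,m}\in P)$, which rearranges to the claim.

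The remaining ingredient is the anti-concentration fact that a binomial random variable $X=|R_{n,p}|\sim \mathrm{Bin}(n^2,p)$ with $p=m/n^2$ (so $\EE X = m$, an integer, assuming $m \leq n^2$) satisfies $\PP(X\geq m)\geq \tfrac12$. This follows because the median of $\mathrm{Bin}(N,p)$ equals $\lfloor Np\rfloor$ or $\lceil Np\rceil$ and in particular equals $Np$ when $Np$ is an integer; hence $\PP(X\geq \EE X)\geq \tfrac12$. (One can cite this median fact, or give the short direct argument that for a binomial with integer mean the probability of being at least the mean is at least one half.) Edge cases $m=0$ and $m=n^2$ are trivial.

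I expect the main obstacle to be purely expository rather than mathematical: making precise the claim that conditioning $R_{n,p}$ on $|R_{n,p}|=\ell$ yields $R_{n,\ell}$ (already asserted in the excerpt as "analogous as in the case of random graphs"), and invoking the right form of the previous lemma so that the monotonicity in $\ell$ of $\PP(R_{n,\ell}\in P)$ is legitimately available. Once that is in hand, the binomial-median estimate closes the argument in one line. The whole proof is therefore two short displays plus a citation, and I would keep it at that level of detail.
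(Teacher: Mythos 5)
Your argument is correct and is exactly the standard binomial-versus-uniform comparison that the paper itself invokes (it gives no explicit proof, deferring to the analogous random-graph argument in Frieze--Karo\'nski): decompose $\PP(R_{n,p}\in P)$ by conditioning on the size, apply the monotonicity lemma, and use that a binomial with integer mean $m$ has median $m$, so $\PP(|R_{n,p}|\geq m)\geq\tfrac12$ (resp.\ $\PP(|R_{n,p}|\leq m)\geq\tfrac12$ in the decreasing case). No gaps; this is the intended proof.
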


The great advantage of working with binomial random subsets are the available tools that take into account the dependence of certain events. In particular, we will use \emph{Janson's inequality}.

\begin{theorem}[Janson's inequality~\cite{FriezeKaronski16}]\label{thm:Janson}
Let $\mathcal{A}$ be a finite set and $p = \{p_a, a \in \mathcal{A}\}$ be a vector that assigns probabilities to each element in $\mathcal{A}$. Further, let $R_p$ be a random subset in such a way that the elements are chosen independently with $\PP(a\in R_p) = p_a$ for each $a\in A$.

Let $\mathcal{B}$ be a family of subsets of $\mathcal{A}$ and, for every $B \in \mathcal{B}$, let $I_B$ be the indicator variable. Then $X = \sum_{B\in \mathcal{B}} I_B$ counts those elements of $\mathcal{B}$ which are entirely contained in $R_p$. Set
\[ \mu = \EE(X), \quad \Delta = \frac{1}{2} \sum_{\substack{A\neq B \in \mathcal{B} \\ A\cap B \neq \emptyset}} \EE(I_AI_B)\]

\noindent Then, it holds that 
\begin{equation}
    \PP(X = 0) \leq \exp\left(-\mu +\Delta\right)
\end{equation}
If $\Delta > \frac{\mu}{2}$, a stronger bound is given by
\begin{equation}
    \PP(X = 0) \leq \exp\left(-\frac{\mu^2}{\mu + 2\Delta}\right)
\end{equation}
\end{theorem}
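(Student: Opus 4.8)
Since Janson's inequality is a standard tool, the plan is to reproduce the usual proof by conditioning along a chain, exploiting the fact (already observed above) that for the binomial random subset the event $\{I_B=1\}$ is monotone increasing and its complement $\{I_B=0\}$ is monotone decreasing. Fix any enumeration $\mathcal B=\{B_1,\dots,B_t\}$ of $\mathcal B$ and decompose, by the chain rule,
\[
\PP(X=0)=\prod_{i=1}^{t}\PP\!\left(I_{B_i}=0 \;\Big|\; \bigwedge_{j<i}\{I_{B_j}=0\}\right).
\]
It then suffices to bound the $i$-th factor by $1-\mu_i+\delta_i$, where $\mu_i=\EE(I_{B_i})$ and the $\delta_i$ are chosen so that $\sum_i\delta_i=\Delta$; applying $1+x\le e^x$ termwise turns the product into $\exp(-\mu+\Delta)$, and $\mu=\sum_i\mu_i=\EE(X)$.

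First I would split the predecessors $\{1,\dots,i-1\}$ of $i$ into the colliding indices $C_i=\{j<i:B_j\cap B_i\neq\emptyset\}$ and the remaining indices $D_i$, and abbreviate $\mathcal C=\bigwedge_{j\in C_i}\{I_{B_j}=0\}$ and $\mathcal D=\bigwedge_{j\in D_i}\{I_{B_j}=0\}$. Since $B_i$ is disjoint from every $B_j$ with $j\in D_i$, the event $\{I_{B_i}=1\}$ is independent of $\mathcal D$; dropping the factor $\PP(\mathcal C\mid\mathcal D)\le 1$ from the denominator and then using a union bound over $\overline{\mathcal C}=\bigvee_{j\in C_i}\{I_{B_j}=1\}$ reduces everything to the pairwise quantities $\PP(\{I_{B_i}=1\}\wedge\{I_{B_j}=1\}\mid\mathcal D)$ for $j\in C_i$. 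The one non-elementary ingredient is \emph{Harris's inequality} (the FKG inequality for product measures): $\{I_{B_i}=1\}\wedge\{I_{B_j}=1\}$ is monotone increasing while $\mathcal D$ is monotone decreasing, so conditioning on $\mathcal D$ can only decrease the probability, giving $\PP(\{I_{B_i}=1\}\wedge\{I_{B_j}=1\}\mid\mathcal D)\le\EE(I_{B_i}I_{B_j})$. This yields the $i$-th factor at most $1-\mu_i+\sum_{j\in C_i}\EE(I_{B_i}I_{B_j})$, and since each unordered colliding pair $\{i,j\}$ is counted exactly once (with $i$ the larger index), $\sum_i\delta_i=\Delta$ as required.

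For the sharper bound when $\Delta>\mu/2$, the plan is the standard random-deletion trick: keep each $B\in\mathcal B$ in a sub-family $\mathcal B'$ independently with probability $\rho$; because $\{X=0\}$ forces $\{X_{\mathcal B'}=0\}$, the bound just proved applied to $\mathcal B'$ gives $\PP(X=0)\le\EE_{\mathcal B'}\exp(-\mu_{\mathcal B'}+\Delta_{\mathcal B'})$, and estimating the exponent and optimizing $\rho$ near $\mu/(2\Delta)$ produces $\exp\!\left(-\mu^2/(\mu+2\Delta)\right)$.

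I expect the main obstacle to be the correlation bookkeeping in the second step: the events must be conditioned in exactly the right order — the colliding block inside, the independent block outside — so that the independent block disappears by plain independence while the colliding block is absorbed by a union bound, and Harris's inequality is precisely what lets the surviving pairwise probabilities be replaced by the unconditional $\EE(I_{B_i}I_{B_j})$ that add up to $\Delta$. A secondary subtlety is that in the random-deletion argument the exponent is not linear in the deletion indicators, so one cannot simply interchange the expectation and the exponential when bounding $\EE_{\mathcal B'}\exp(-\mu_{\mathcal B'}+\Delta_{\mathcal B'})$, and a small amount of care is needed there.
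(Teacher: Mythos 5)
This theorem is quoted in the paper as a known result from~\cite{FriezeKaronski16} and is not proved there, so there is no in-paper argument to compare against; what follows assesses your reconstruction on its own terms. Your proof of the first bound is the standard Boppana--Spencer argument and is correct in all essentials: the chain-rule factorization, the split of the predecessors of $B_i$ into colliding and disjoint blocks, the use of plain independence to remove the disjoint block, the union bound over the colliding block, and Harris's inequality to replace the conditional pair probabilities by $\EE(I_{B_i}I_{B_j})$, with each unordered intersecting pair counted once so that the error terms sum to $\Delta$. That is a complete and correct route to $\PP(X=0)\leq\exp(-\mu+\Delta)$.

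The second half has a genuine gap: the random-deletion trick cannot produce the stated bound. Keeping each $B$ independently with probability $\rho$ and taking expectations of the exponent (the legitimate version of the step you flag as delicate -- since the pointwise bound $\ln\PP(X=0)\leq-\mu_{\mathcal B'}+\Delta_{\mathcal B'}$ holds for \emph{every} realization of $\mathcal B'$, one may average the right-hand side) gives $\ln\PP(X=0)\leq-\rho\mu+\rho^{2}\Delta$. The minimum of this quadratic over $\rho\in[0,1]$ is attained at $\rho=\mu/(2\Delta)$ (feasible precisely when $\Delta\geq\mu/2$) and equals $-\mu^{2}/(4\Delta)$. Since $\Delta>\mu/2$ implies $4\Delta>\mu+2\Delta$, the resulting bound $\exp\left(-\mu^{2}/(4\Delta)\right)$ is strictly \emph{weaker} than the claimed $\exp\left(-\mu^{2}/(\mu+2\Delta)\right)$, and no choice of $\rho$ closes the difference. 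The form with denominator $\mu+2\Delta$ comes from a different argument, namely Janson's original exponential-moment proof: one bounds $\PP(X=0)\leq\EE\left(e^{-sX}\right)$, shows via the FKG inequality that $-\frac{d}{ds}\ln\EE\left(e^{-sX}\right)\geq\mu\,e^{-s(\mu+2\Delta)/\mu}$, integrates, and lets $s\to\infty$. (For the paper itself the gap is inconsequential, since only the first bound $\exp(-\mu+\Delta)$ is invoked in the proof of Theorem~\ref{thm:torup}.)
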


In the following, we will use Janson's inequality to bound the probability that a fixed point on the $n\times n$ torus is not dominated. To do so, we will define $\mathcal{A}$ to be the points of the discrete torus and $\mathcal{B}$ to be the set of dominating pairs of this fixed point. A dominating pair of a point $p$ are two points of $T_n$ that define at least one line that dominates $p$. Then, Janson's inequality yields an upper bound on the probability that no dominating pair is contained in the random point set.
Subsequently, we will use the union bound to obtain an upper bound on the probability that there exists a point that is not dominated.

\begin{theorem}[Union bound~\cite{Klenke13}]
Let $\{A_i\}_{i\in\NN}$ be a countable family of events. Then
\[\PP\left(\bigcup_{i \in \NN} A_{i}\right)\leq \sum_{i\in\NN} \PP(A_{i}).\]
\end{theorem}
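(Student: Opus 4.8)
The plan is to turn the countable union into a countable \emph{disjoint} union, where $\sigma$-additivity of $\PP$ applies without loss. First I would disjointify: set $B_1 = A_1$ and, for $i\geq 2$, let $B_i = A_i \setminus \bigcup_{j=1}^{i-1} A_j$. Each $B_i$ lies in the underlying $\sigma$-algebra, being obtained from the $A_j$ by finitely many complements and unions; the $B_i$ are pairwise disjoint by construction; and $\bigcup_{i} B_i = \bigcup_{i} A_i$, because any element of the union belongs to $A_i$ for some smallest index $i$ and hence to that $B_i$.

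The second step is purely mechanical. Countable additivity applied to the disjoint family $\{B_i\}_{i\in\NN}$ gives
\[\PP\!\left(\bigcup_{i\in\NN} A_i\right) = \PP\!\left(\bigcup_{i\in\NN} B_i\right) = \sum_{i\in\NN}\PP(B_i),\]
and since $B_i \subseteq A_i$, monotonicity of the probability measure yields $\PP(B_i)\leq \PP(A_i)$ for every $i$; summing these term by term gives the claimed inequality. An equivalent route, if one prefers to avoid disjointification, is to prove the two-set case $\PP(A\cup B)=\PP(A)+\PP(B)-\PP(A\cap B)\leq \PP(A)+\PP(B)$, extend it to finite unions by induction on the number of sets, and then let $n\to\infty$ using continuity of $\PP$ from below together with the fact that $\sum_{i\leq n}\PP(A_i)$ is nondecreasing in $n$.

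I do not expect any genuine obstacle here: this is Boole's inequality, an immediate consequence of the axioms of a probability space, which is why \cite{Klenke13} records it as a standard fact. The only point meriting a word of care is measurability --- that a countable union of events is again an event --- but this is built into the definition of the $\sigma$-algebra on which $\PP$ is defined, so the rest is bookkeeping. (In the paper this bound is invoked only to combine, over the $n^2$ torus points, the per-point non-domination probabilities supplied by Janson's inequality, so in fact the finite version of the inequality already suffices for the intended application.)
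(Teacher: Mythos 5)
Your proof is correct: the disjointification $B_i = A_i \setminus \bigcup_{j<i} A_j$, countable additivity, and monotonicity give Boole's inequality exactly as in any standard treatment, and your remark that only the finite case over the $n^2$ torus points is actually needed is accurate. The paper itself offers no proof --- it simply cites \cite{Klenke13} for this standard fact --- so there is no competing argument to compare against; your write-up is the canonical one and has no gaps.
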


We are ready to prove the next theorem.

\begin{theorem}\label{thm:torup}
For $n$ prime, $\gd^T_n = O(\sqrt{n \log n})$.
\end{theorem}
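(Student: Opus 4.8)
The plan is to carry out the probabilistic argument foreshadowed above: apply Janson's inequality (Theorem~\ref{thm:Janson}) in the binomial model $R_{n,p}$ to bound the probability that a fixed point of $T_n$ is left undominated, take a union bound over all $n^{2}$ points, and transfer the conclusion to a fixed-size set via Lemma~\ref{lem:equiv}. I would take $p = C\sqrt{(\log n)/n^{3}}$ for a constant $C$ to be fixed later, so that the expected size $p\,n^{2} = C\sqrt{n\log n}$ already has the target order.

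Fix $\pt{q}\in T_n$. Since $n$ is prime, $\pt{q}$ lies on exactly $n+1$ lines, each with exactly $n$ points, any two of them meeting only in $\pt{q}$ (the facts of Fowler et al.\ recalled in Subsection~\ref{sec:torlow}); thus the $n^{2}-1$ points $\ne\pt{q}$ split into $n+1$ classes of $n-1$ points. Let $\mathcal{A}=T_n$ and let $\mathcal{B}$ be the set of dominating pairs of $\pt{q}$, i.e.\ the $2$-subsets $\{\pt{x},\pt{y}\}$ of points $\ne\pt{q}$ collinear with $\pt{q}$; each such pair lies in exactly one of the $n+1$ classes, so $|\mathcal{B}|=(n+1)\binom{n-1}{2}=\tfrac12 n^{3}(1+o(1))$. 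With $X$ the number of pairs from $\mathcal{B}$ contained in $R_{n,p}$, this gives $\mu=\EE(X)=|\mathcal{B}|\,p^{2}=\tfrac12 C^{2}\log n\,(1+o(1))$. For $\Delta$, two distinct dominating pairs $A\ne B$ with $A\cap B\ne\emptyset$ share exactly one point, and then $A\cup B$ is a set of three points collinear with $\pt{q}$; counting these (pick the shared point among the $n^{2}-1$ points $\ne\pt{q}$, then an unordered pair of further points on its line through $\pt{q}$) gives $\tfrac12(n^{2}-1)(n-2)(n-3)=\tfrac12 n^{4}(1+o(1))$ of them, whence $\Delta=\tfrac12 n^{4}(1+o(1))\,p^{3}=O\!\big(n^{-1/2}(\log n)^{3/2}\big)\to 0$.

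Because $\Delta\to 0$ (in particular $\Delta<\mu/2$), the first bound of Theorem~\ref{thm:Janson} gives, uniformly in $\pt{q}$ by symmetry of the torus, $\PP(\pt{q}\text{ is not dominated})\le\PP(X=0)\le\exp(-\mu+\Delta)=n^{-C^{2}/2+o(1)}$. The union bound over the $n^{2}$ points then yields $\PP(R_{n,p}\text{ is not dominating})\le n^{\,2-C^{2}/2+o(1)}$, which tends to $0$ as soon as $C>2$; I would fix $C=3$. Finally I set $m=\lceil 3\sqrt{n\log n}\,\rceil$, so that $p'=m/n^{2}\ge C\sqrt{(\log n)/n^{3}}$. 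The property of not being a dominating set is monotone decreasing, so by monotonicity in $p$ we still have $\PP(R_{n,p'}\text{ is not dominating})\to 0$, and Lemma~\ref{lem:equiv} gives $\PP(R_{n,m}\text{ is not dominating})\le 2\,\PP(R_{n,p'}\text{ is not dominating})<1$ for all large $n$. Hence for large $n$ there is a dominating set of $T_n$ of size $m=O(\sqrt{n\log n})$, and the finitely many remaining values of $n$ are absorbed into the constant.

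I expect the substantive parts to be minor: getting the leading constants of $\mu$ and $\Delta$ right, and, more to the point, observing that $\Delta$ does not merely stay below $\mu$ but actually vanishes, so that Janson's inequality effectively reduces to the independent-trials bound and no second-moment correction is needed. The only real care is in the final model switch --- applying Lemma~\ref{lem:equiv} to the (monotone decreasing) property of not being a dominating set, with $p=m/n^{2}$, and checking that the ceiling in $m$ only increases $p$.
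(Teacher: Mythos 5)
Your proposal is correct and follows essentially the same route as the paper: Janson's inequality in the binomial model for the event that a fixed point has no dominating pair, the same counts $(n+1)\binom{n-1}{2}$ for $\mu$ and collinear triples for $\Delta$, a union bound over the $n^2$ points, and a transfer to a fixed-size set via Lemma~\ref{lem:equiv} and monotonicity. The only differences are cosmetic (you parametrize by $p$ and switch models at the end rather than the start, and you drop the indicator for $\pt{q}$ itself being chosen, which only weakens the bound harmlessly), and your threshold $C>2$ matches the paper's choice $m=(2+\varepsilon)\sqrt{n\log n}$.
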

\begin{proof}
Let $R_{m}$ be a uniform random point set of size~$m$ on~$T_n$. By Theorem \ref{thm:tordown} and the upper bound on the size of point sets in general position~\cite{Fowler-et-al12}, we can assume $\sqrt{2n} \leq m \leq n+1$.
Now, fix $\pt{p} \in T_n$ and let~$P_\pt{p}$ be the set of all point sets on~$T_n$ that do not dominate~$\pt{p}$. We note that this property is a monotone decreasing point set property.
By Lemma~\ref{lem:equiv}, it holds that
\[\PP(R_{m} \in P_\pt{p}) \leq 2 \PP(R_{n,m/n^2} \in P_\pt{p}).\]
Therefore, we continue by considering $R_{n,m/n^2}$. 

$R_{n,m/n^2}$ dominates~$\pt{p}$ if and only if $R_{n,m/n^2}$ contains a dominating pair of~$\pt{p}$ or $\pt{p}\in R_{n,m/n^2}$. 
So, once again, let~$X$ be the random variable that counts the number of dominating pairs of~$\pt{p}$. That is
\[X=I_\pt{p} + \sum_{D\in \mathcal{D}_\pt{p}} I_D\] 
where~$I_x$ are indicator random variables and $\mathcal{D}_\pt{p}$~denotes the set of dominating pairs of~$\pt{p}$. Note that $X \geq 1$ if and only if $\pt{p}$~is dominated by the random point set. 

Next, we apply Janson's inequality and identify $\mathcal{A}$ with~$T_n$, $R_p$~with~$R_{n,m/n^2}$ and~$\mathcal{B}$ with~$\mathcal{D}_\pt{p}\cup \{\pt{p}\}$. All we have to do is computing~$\mu$ and~$\Delta$. 

Since~$\pt{p}$ is incident to $n+1$~lines that consist of $n$~points, there are exactly $(n+1)\binom{n-1}{2}$~distinct dominating pairs of~$\pt{p}$ on~$T_n$. Consequently,
\begin{align*}
    \mu = \EE(X)&= \PP(\pt{p} \in R_{n,m/n^2}) + \sum_{D\in \mathcal{D}_\pt{p}} \PP(D \in R_{n,m/n^2}) \\
    &= \frac{m}{n^2}+\sum_{D\in \mathcal{D}_\pt{p}} \left(\frac{m}{n^2}\right)^2 =  \frac{m}{n^2} + \frac{(n+1)(n-1)(n-2)}{2}\left(\frac{m}{n^2}\right)^2\\
    &=\frac{m^2}{2n} + \Theta\left(\left(\frac{m}{n}\right)^2\right)
\end{align*}

Furthermore, two distinct dominating pairs~$C,D$ of~$\pt{p}$ have a non-empty intersection if and only if their intersection contains exactly one point and the three points in~$C\cup D$ are collinear. The number of collinear triples that lie on a common line with~$\pt{p}$ is~$(n+1)\binom{n-1}{3}$. Hence,
\begin{align*}
    \Delta &= \frac{1}{2} \sum_{\substack{C\neq D \in \mathcal{D}_\pt{p}\cup \{\pt{p}\} \\ C\cap D \neq \emptyset}} \EE(I_CI_D) = \frac{1}{2}  \sum_{\substack{\pt{x},\pt{y},\pt{z} \in [n]^2 \\\pt{x},\pt{y},\pt{z}, \pt{p} \text{ collinear}}} \left(\frac{m}{n^2}\right)^3 \\
    &= \frac{(n+1)(n-1)(n-2)(n-3)}{12}\left(\frac{m}{n^2}\right)^3\\
    &= \frac{m^3}{12n^2} + \Theta\left( \left( \frac{m}{n}\right)^3\right)
\end{align*}

Using Janson's inequality, we derive
\begin{align*}
    \PP(R_{m} \in P_\pt{p}) &\leq 2 \,\PP(R_{n,m /n^2} \in P_\pt{p}) = 2\, \PP(X = 0) \leq 2\, \exp(-\mu + \Delta)\\
    &= 2\, \exp\left(-\frac{m^2}{2n} + \Theta\left(\left(\frac{m}{n}\right)^2\right) + \frac{m^3}{12n^2} + \Theta\left( \left( \frac{m}{n}\right)^3\right)\right)
\end{align*}

The sum of the asymptotic terms above is~$O(1)$ because we assumed that $m\leq n+1$. Finally, we use the union bound to show that the probability that there exists a free point (that is a point which is not dominated by $R_m$) is smaller than 1. Consequently, there has to be a realization of $R_m$ where all points are dominated.
\begin{align*}
    \PP(\exists \text{ a free point in } T_n) &= \PP(R_{m} \in \bigcup_{\pt{p} \in [n]^2} P_\pt{p}) \\
    &\leq \sum_{\pt{p}\in [n]^2} \PP(R_{m} \in P_\pt{p}) \\
    &\leq 2 n^2 \exp\left(-\frac{m^2}{2n} + \frac{m^3}{12n^2} + O(1)\right)\\
    &=\exp\left(2 \log n - \frac{m^2}{2n} + \frac{m^3}{12n^2} + O(1)\right)
\end{align*}
If we choose $m = (2+\varepsilon)\sqrt{n\log n}$, with $\varepsilon>0$, it follows that
\[2 \log n - \frac{m^2}{2n} + \frac{m^3}{12n^2} + O(1) \rightarrow -\infty \quad \text{as } n \rightarrow \infty\]
Consequently, for all $n$ large enough, there is a dominating set of size~$O(\sqrt{n \log n})$.
\end{proof}

\begin{figure}[h]
    \centering
    \includegraphics[page=4, width=0.24\textwidth]{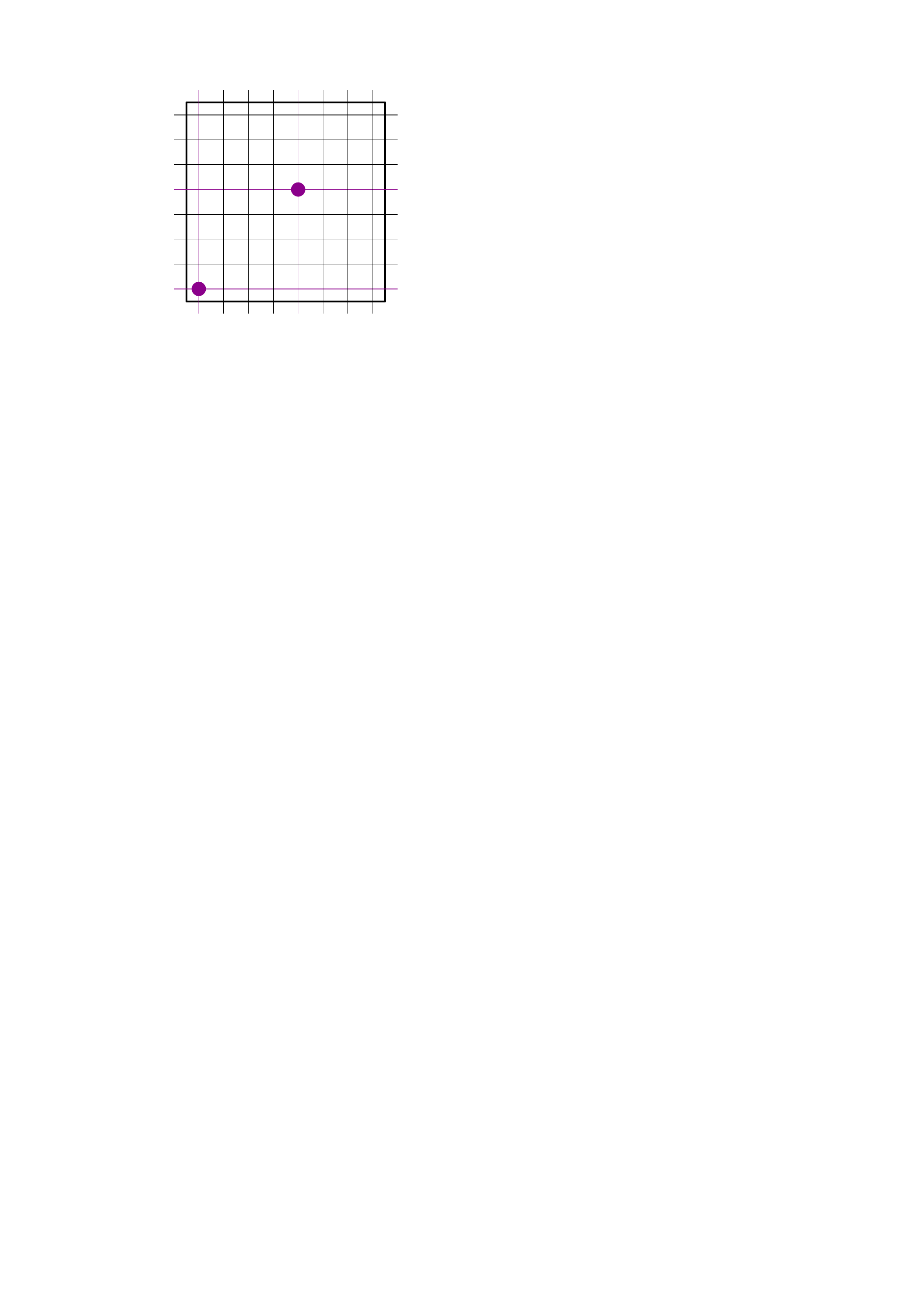}
    \includegraphics[page=5, width=0.24\textwidth]{8x8blowup.pdf}
    \includegraphics[page=6, width=0.24\textwidth]{8x8blowup.pdf}
    \includegraphics[page=7, width=0.24\textwidth]{8x8blowup.pdf}
    \caption{Blowing up a point set in $T_{2}$ to a corresponding one in $T_{8}$ by the factors $2,3$ and $4$. Notice how factors like $2$ or $3$ would yield a set dominating less points than the one obtained by the blow up factor $4$.}
    \label{fig:meshes}
\end{figure}

We can prove further bounds for arbitrary composite numbers $n$. 
The main idea is to take always a factor $p$ of $n$ and divide $T_n$ into $p^2$ grids of size $n/p\times n/p$, fix a point $x$ in a dominating set $S$ of $T_p$ (which we will assume to be $(0,0)$) and blow it up by the factor $n/p$. Note that this is equivalent to refining a $p \times p$ grid on $T_p$ by a factor of $n/p$ while keeping the points of $S$ at the same position (See Figure \ref{fig:meshes} for the case $n=2^3$ and blow up factor $4$).
By elementary number theory, it follows that the lines that are incident to $x$ and another point in the blown up copy dominate (nearly) all points that are congruent to points in $T_p$ that are dominated by the line incident to $x$ and the corresponding point in $S$. Or in other words, if two points are congruent to each other modulo~$p$, the lines that they span with~$(0,0)$ will bundle in all points that are congruent to~$(0,0)$.

\begin{lemma}\label{L:blowup}
    Let $n = p\cdot q$, where $p$ is prime and $(y_1,y_2) \in T_p\setminus \{(0,0)\}$.\\
    If $(0,0)$ and $(y_1,y_2)$ dominate the point $(a_1,a_2)\neq (0,0)$ on the $p\times p$ torus $T_p$
    then $(0,0)$ and $\pi_{n}(y_1q,\;y_2 q)$ dominate all points $(b_1,b_2)$ on $T_n$ with $\pi_p(b_1, b_2) = (a_1,a_2)$.
\end{lemma}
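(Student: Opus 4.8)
The plan is to reduce the statement to elementary modular arithmetic about which points lie on a line through the origin. First I would set up the parametrization: the line on $T_p$ through $(0,0)$ and $(y_1,y_2)$ consists of the points $\pi_p(ty_1, ty_2)$ for $t \in \{1,\dots,p\}$, and by hypothesis there is some $t_0$ with $\pi_p(t_0 y_1, t_0 y_2) = (a_1, a_2)$. On $T_n$, the line through $(0,0)$ and $\pi_n(y_1 q, y_2 q)$ consists of the points $\pi_n(s y_1 q, s y_2 q)$ for $s \in \{1,\dots,n\}$. The goal is to show that for every lift $(b_1,b_2)$ of $(a_1,a_2)$, i.e.\ every point with $b_1 \equiv a_1$ and $b_2 \equiv a_2 \pmod p$, there is a value of $s$ with $s y_i q \equiv b_i \pmod n$ for $i=1,2$.

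The key step is the following Chinese-Remainder-style argument. Since $n = pq$, congruence modulo $n$ splits; I would first check the congruences modulo $q$ and modulo every prime power dividing $q$ carefully, but the cleaner route is to observe that $s y_i q \equiv 0 \pmod q$ automatically, so the condition modulo $q$ forces $b_i \equiv 0 \pmod{\gcd(q,\cdot)}$ only on the part of $b_i$ that is a multiple of $q$ — this is exactly why the lemma says ``all points $(b_1,b_2)$ with $\pi_p(b_1,b_2) = (a_1,a_2)$,'' because such lifts differ by multiples of $p$, not of $q$, and we need $b_i$ to be reachable. Let me instead argue modulo $p$ and modulo $n/\gcd$ directly: writing $s = t_0 + kp$ for $k = 0,1,\dots,q-1$ gives $q$ values of $s$, and $s y_i q \equiv t_0 y_i q + k p q y_i \equiv t_0 y_i q \pmod n$ is wrong since $pq y_i = n y_i \equiv 0$; so all these $s$ give the \emph{same} point $\pi_n(t_0 y_1 q, t_0 y_2 q)$. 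Thus the right parametrization is $s$ ranging over a full residue system mod $n$, and I must show the map $s \mapsto (s y_1 q \bmod n,\ s y_2 q \bmod n)$ hits every lift of $(a_1,a_2)$. Modulo $p$: since $(y_1,y_2) \neq (0,0)$ in $T_p$ and $p$ is prime, the point $\pi_n(s y_1 q, s y_2 q) \bmod p$ equals $(s q y_1 \bmod p, s q y_2 \bmod p)$; as $q$ is invertible mod $p$ and $s$ ranges over all residues mod $p$, this traces out the whole line through $(0,0),(y_1,y_2)$ in $T_p$, and in particular equals $(a_1,a_2)$ for those $s \equiv t_0 q^{-1} \pmod p$. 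For such $s$, I then vary the part of $s$ that is a multiple of $p$: writing $s = t_0 q^{-1} + p\ell$, the image is $((t_0 q^{-1} + p\ell) q y_1 \bmod n, \dots)$, and the $p\ell q y_i$ term moves the point by multiples of $pq \cdot (\ell y_i / \gcd)$ — I would compute precisely that $\{ p\ell q y_i \bmod n : \ell \} = \{ pq y_i \cdot \ell \bmod n\} = p \cdot \{ q y_i \ell \bmod q\}$, which is all multiples of $p\cdot\gcd(q y_i, q)$ in $\{0,\dots,n-1\}$.

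The main obstacle I anticipate is precisely this last gcd bookkeeping: the lemma statement includes the parenthetical ``(nearly) all points'' in the surrounding prose, which signals that when $\gcd(y_i, q) > 1$ the line does \emph{not} actually hit all lifts of $(a_1,a_2)$, yet the lemma as stated claims it does — so either the intended reading restricts to $q$ for which $\gcd(y_i,q)=1$ in the relevant coordinate, or (more likely) one uses \emph{both} coordinates simultaneously: a lift $(b_1,b_2)$ is determined modulo $n$ once we know it modulo $p$ and we know one more linear relation coming from collinearity, and the two coordinates together give enough freedom. The careful move is to note that since $(0,0),(y_1,y_2),(a_1,a_2)$ are collinear on $T_p$ there is a primitive direction vector $(d_1,d_2)$ with $\gcd(d_1,d_2,p)=1$ spanning the line, reduce to the case $(y_1,y_2)=(d_1,d_2)$ (replacing the point by the closest lattice point to the origin on the line, as in the proof of Theorem~\ref{domptsUP}), and then $\gcd(d_1 q, d_2 q, n)$ divides $q$; the line on $T_n$ through $(0,0)$ in direction $(d_1 q, d_2 q)$ has exactly $n / \gcd(d_1 q, d_2 q, n) = n/q' $ points for the appropriate $q' \mid q$, and one checks these project onto exactly the $p$ points of the $T_p$-line each with the same multiplicity, covering all lifts of $(a_1,a_2)$. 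I would verify the edge cases $q$ a prime power and $q$ having a common factor with one coordinate by direct substitution, and present the general case via the primitive-vector normalization so that the gcd is controlled. Everything else — the splitting of congruences mod $n = pq$, invertibility of $q$ mod $p$, and the counting of points on a torus line through the origin — is routine and can be cited from the already-established facts in Section~\ref{sec:torlow}.
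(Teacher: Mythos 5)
Your proposal contains a genuine gap, and it is precisely at the point you flag as ``the main obstacle.'' You set out to show that the \emph{single} orbit $\{\pi_n(s\,y_1q,\,s\,y_2q)\}$ (or, after your primitive-vector normalization, the single torus line through $(0,0)$ in direction $(d_1q,d_2q)$ or $(d_1,d_2)$) hits \emph{every} lift of $(a_1,a_2)$. That cannot work by counting: there are $q^2$ points $(b_1,b_2)$ with $\pi_p(b_1,b_2)=(a_1,a_2)$, while the subgroup generated by $q(d_1,d_2)$ has only $n/q=p$ elements, and even the full torus line through $(0,0)$ with primitive direction $(d_1,d_2)$ has $n=pq$ points of which exactly $q$ reduce to $(a_1,a_2)$ modulo $p$. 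So for $q\geq 2$ no single line through the pair contains all lifts, and your concluding claim that the points of that line ``cover all lifts of $(a_1,a_2)$'' is false. The lemma is nevertheless true because domination only requires, for each dominated point separately, \emph{some} common line through it and the two set points --- and on the torus two points can lie on many distinct common lines (this is exactly the non-uniqueness for composite $n$ discussed in Section~\ref{sec:torlow}).

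The paper's proof reverses the quantifier in the way your argument needs: for each individual lift $(b_1,b_2)=(a_1+t_1p,\,a_2+t_2p)$ it parametrizes the line through $(0,0)$ and $(b_1,b_2)$ as $L_n=\{\pi_n(t(b_1,b_2)) : t\in\ZZ\}$ and exhibits an explicit multiplier, $t=qy_1A$ with $Aa_1\equiv 1 \pmod p$, for which $\pi_n(t(b_1,b_2))=\pi_n(qy_1,qy_2)$; hence $(0,0)$, $(b_1,b_2)$ and $q(y_1,y_2)$ are collinear on $T_n$ via a line that depends on the lift. The computation uses only that $q\cdot p\equiv 0\pmod n$ to kill the $t_1p$, $t_2p$ perturbations, so no gcd bookkeeping arises at all. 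Your setup (parametrizing the $T_p$-line, lifting congruences modulo $p$, invertibility of $a_1$ mod $p$) contains all the right ingredients, but to close the argument you must fix a lift first and choose the line adapted to it, rather than trying to make one line serve all $q^2$ lifts.
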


\begin{proof}
    Without loss of generality, let $a_1 \neq 0$.
    Let us recall that, if $p$ is prime, there is a unique $1\leq A < p$ such that $A a_1 = 1 \Mod p$. We also know that there is a unique line defined by $(0,0)$ and $(y_1,y_2)$ if $p$ is prime, which will be of the form
    \[L = \left\{(a,b) \in T_p \mid  b=aAa_2 \Mod p\right\}\]
    
    Since it dominates $(y_1,y_2)$, it holds that 
    \begin{equation}\label{eq:y_2}
        y_2 = y_1 A a_2 \Mod p
    \end{equation}
    
    Further, as $\pi_p(b_1,b_2) = (a_1,a_2)$ there are $t_1, t_2 \in \ZZ$ such that
    \[b_1 = a_1 + t_1p \quad b_2 = a_2 + t_2p\]
    
    Now, we consider the set
    \[L_n = \left\{(a,b) \in T_n \mid  (a,b) = \pi_n(t(b_1,b_2)), t \in \ZZ\right\}\]

    that is clearly a subset of at least one line incident to $(0,0)$ and $(b_1,b_2)$ on~$T_n$. We want to show, that it is also incident to $q(y_1,y_2)$, which is the case if there is a $t$ such that $t(b_1,b_2) = q(y_1,y_2) \Mod n$. Notice that
    \[q(y_1,y_2) = q(y_1,y_1Aa_2 + t_{y}p) = q(y_1,y_1Aa_2) \Mod n\]
    So the choice $t=qy_1A$ yields
    \begin{align*}
    qy_1A(b_1, b_2) &= qy_1A(a_1 + t_1p, a_2 + t_2p) \Mod n\\
    &= qy_1(1 + t_1'p, Aa_2 + t_2'p) \Mod n\\
    &= q(y_1, y_1Aa_2) \Mod n\\
    &= q(y_1,y_2) \Mod n
    \end{align*}
    which was exactly what we wanted to show.
\end{proof}

Now, one thing we should note about Lemma \ref{L:blowup} is that $(0,0)$ is quite an arbitrary choice as a fixed point when we blow up the set. The effect depends entirely on the distance between $(0,0)$ and $q(y_{1},y_{2})$. To see this, remember that a line on the torus $T_{n}$ is just a projection of a line in $\ZZ^{2}$ modulo $n$. So, we can think about $\ZZ^{2}$ as a tiling of copies of the torus ($\ZZ^{2}$ is the universal cover of the discrete torus). Now we could just choose a different cutout (fundamental domain) for $T_{n}$ and project onto this cutout modulo $n$. See Figur \ref{fig:cutout} for an illustration. Technically, one could also repeat the exact same computations in the proof of Lemma \ref{L:blowup} to prove the following corollary. 
\begin{figure}
\centering
\includegraphics[width=0.8\textwidth, page=2]{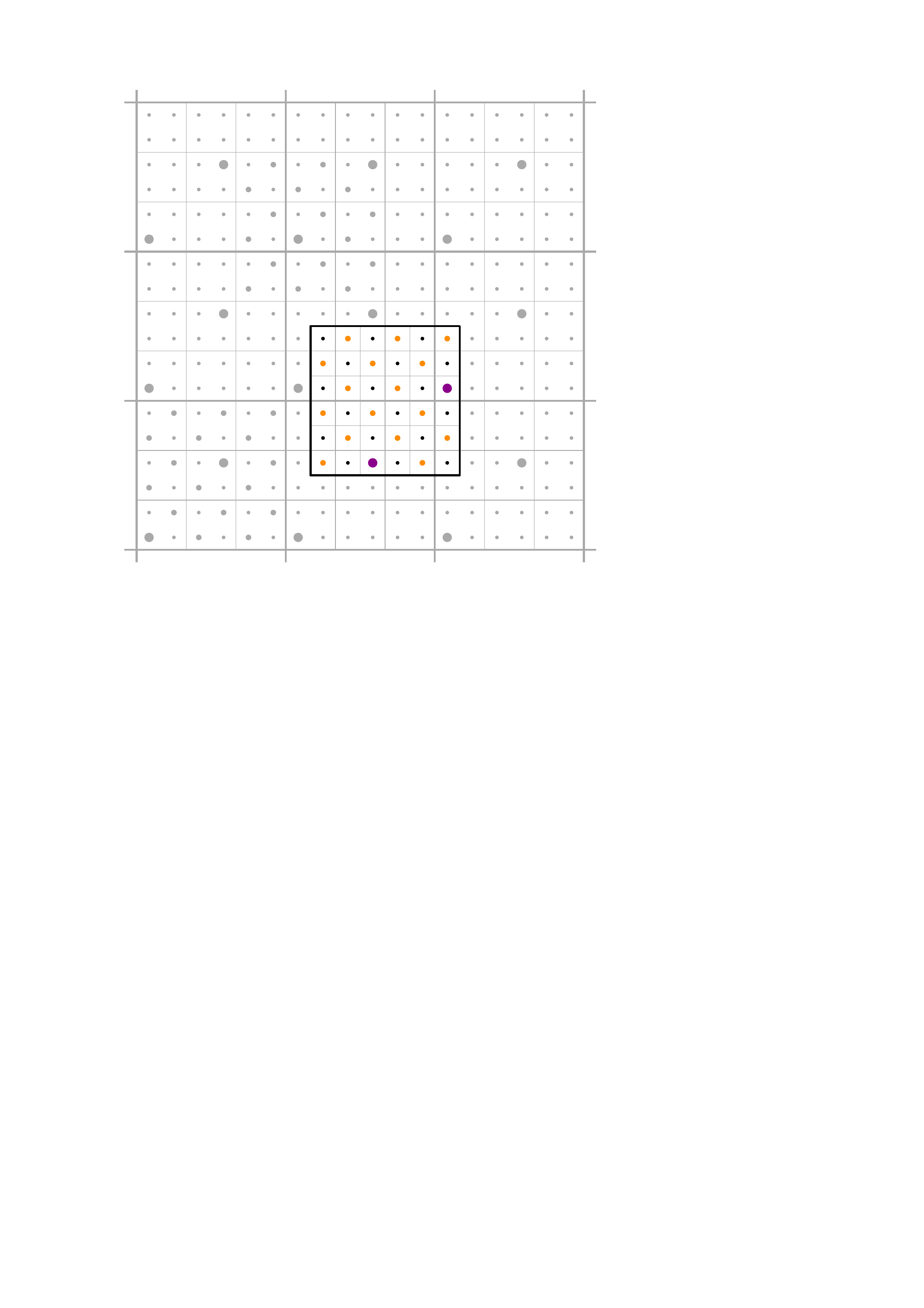}
\caption{Choosing different cutouts for $T_{6}$. Orange marked points are always dominated by the magenta points on $T_{6}$}\label{fig:cutout}
\end{figure}

\begin{corollary}\label{C:blowup}
    Let $n = p\cdot q$, where $p$ is prime and $(x_{1},x_{2}) \neq (y_1,y_2) \in T_p$.\\
    If $(x_{1},x_{2})$ and $(y_1,y_2)$ dominate the point $(a_1,a_2)\neq (x_{1},x_{2})$ on the $p\times p$ torus $T_p$, then $(x_{1},x_{2})$ and $\pi_{n}(x_1+q(y_{1}-x_{1}),\;x_2 +q(y_{1}-x_{1}))$ dominate all points $(b_1,b_2)$ on $T_n$ with $\pi_p(b_1, b_2) = (a_1,a_2)$.
\end{corollary}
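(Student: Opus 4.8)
The plan is to deduce the corollary from Lemma~\ref{L:blowup} by an integer translation that moves the fixed point $(x_1,x_2)$ to the origin. Choose integer representatives $x_1,x_2 \in \{0,\dots,p-1\}$ and let $\tau(u,v) = (u-x_1,v-x_2)$ be the corresponding translation of $\ZZ^2$. Translation by an integer vector sends straight lines to straight lines and commutes with both projections, so it induces bijections $\tau_n$ of $T_n$ and $\tau_p$ of $T_p$ that map torus-lines to torus-lines, preserve collinearity (hence domination), and are compatible with the reduction map $T_n \to T_p$ in the sense that $\pi_p(\tau_n(\cdot)) = \tau_p(\pi_p(\cdot))$.

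First I would transport the hypothesis through $\tau_p$: the points $(x_1,x_2),(y_1,y_2),(a_1,a_2)$ of $T_p$ become $(0,0)$, $(y_1',y_2') := \pi_p(y_1-x_1,y_2-x_2)$ and $(a_1',a_2') := \pi_p(a_1-x_1,a_2-x_2)$, where $(y_1',y_2')\neq(0,0)$ (since $(x_1,x_2)\neq(y_1,y_2)$) and $(a_1',a_2')\neq(0,0)$ (since $(a_1,a_2)\neq(x_1,x_2)$), and $(0,0),(y_1',y_2')$ still dominate $(a_1',a_2')$ on $T_p$. Lemma~\ref{L:blowup} then gives that $(0,0)$ and $\pi_n(q y_1', q y_2')$ dominate every $(b_1',b_2')\in T_n$ with $\pi_p(b_1',b_2')=(a_1',a_2')$. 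Finally I would apply $\tau_n^{-1}$, i.e.\ translate back by $(x_1,x_2)$: this moves $(0,0)$ to $(x_1,x_2)$, moves $\pi_n(q y_1', q y_2')$ to $\pi_n(x_1+q(y_1-x_1),\,x_2+q(y_2-x_2))$ (using $q y_i' + x_i \equiv q(y_i-x_i)+x_i \Mod n$), and maps the fibre over $(a_1',a_2')$ bijectively onto the fibre over $(a_1,a_2)$ under reduction; since $\tau_n^{-1}$ preserves domination, the claim follows. (In particular the second coordinate of the blow-up point should read $x_2+q(y_2-x_2)$, which I take to be the intended statement.)

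The only genuine work is bookkeeping with representatives: one must confirm that reducing an integer point modulo $p$ and then translating on $T_p$ gives the same thing as translating in $\ZZ^2$ and then reducing. This is transparent once everything is phrased in the groups $\ZZ/n\ZZ\times\ZZ/n\ZZ$ and $\ZZ/p\ZZ\times\ZZ/p\ZZ$ (where $\tau$ is affine and reduction mod $p$ is a homomorphism), but it needs a little care because the paper consistently uses smallest-non-negative-remainder representatives. Alternatively --- and more in the spirit of the preceding discussion about choosing a different fundamental domain for $T_n$ --- one can simply rerun the computation in the proof of Lemma~\ref{L:blowup} with every occurrence of $(0,0)$ replaced by $(x_1,x_2)$; each displayed identity survives after subtracting $(x_1,x_2)$ from all coordinates. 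I expect the translation argument to be the cleaner route, and the representative bookkeeping to be the only fiddly point.
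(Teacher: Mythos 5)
Your proposal is correct and matches the paper's intent: the paper gives no separate proof, remarking only that one can choose a different fundamental domain (equivalently, translate $(x_1,x_2)$ to the origin) and repeat the computations of Lemma~\ref{L:blowup}, which is exactly your translation argument. Your observation that the second coordinate of the blow-up point should read $x_2+q(y_2-x_2)$ rather than $x_2+q(y_1-x_1)$ is also right; that is a typo in the stated corollary.
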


Another thing worth noting is that given $(0,0)$ and $q(y_{1},y_{2})$, we can see them as a blown up copy of $(0,0)$ and $(y_{1},y_{2})$ or as a blown up copy of $q(y_{1},y_{2})$ and $(q-1)(y_{1},y_{2})$. If $ q(y_{1}, y_{2})) \neq (0,0) \Mod p$, we can apply Corollary \ref{C:blowup} to the second pair of points with $q(y_{1},y_{2})$ as a fixed point and deduce that all points congruent to $(0,0)$ are dominated.\\

We will use these observations in the proofs of the following theorems. Nevertheless it might be even more useful for the design of algorithms that look for dominating sets on $T_n$. 

We start by proving a general upper bound if $n$ is a composite number.

\begin{theorem}
    Let $n=pq$, where $p$ is prime. If $\gcd (p,q) = 1$, then
    \[\gd_n^T \leq 2p\]
    otherwise
    \[\gd_{n}^{T} \leq 2(p+1)\]
\end{theorem}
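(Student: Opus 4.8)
The plan is to produce the dominating set of $T_n$ as a blow-up of a cheap dominating set of $T_p$, along the lines of the paragraph preceding Lemma~\ref{L:blowup}, and then to patch the one residue class that the blow-up does not handle automatically. For the base set, observe that since $p$ is prime any two parallel lines $L_1,L_2$ of $T_p$ already dominate $T_p$: a point on $L_1\cup L_2$ is dominated, and a point $\pt z$ off both lines lies on the line $\ell$ through $(0,0)$ and $\pt z$ (place $(0,0)$ on $L_1$); this $\ell$ cannot be parallel to $L_1,L_2$, since otherwise $\ell=L_1$ and $\pt z\in L_1$, so $\ell$ meets $L_2$ in a point $\pt v\neq(0,0)$, and $\{(0,0),\pt v\}$ dominates $\pt z$. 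Call this set $S_0$; it has $2p$ points and every $\pt z\in T_p\setminus\{(0,0)\}$ lies on a line through $(0,0)$ and a point of $S_0$. Put $S=\{\pi_n(q\pt w):\pt w\in S_0\}$, the blow-up of $S_0$ about $(0,0)$ by the factor $q$. Since $q(\pt w-\pt w')\equiv(0,0)\Mod n$ forces $\pt w\equiv\pt w'\Mod p$ and $S_0\subseteq T_p$, the map $\pt w\mapsto\pi_n(q\pt w)$ is injective, so $|S|=2p$.

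Next I would feed every pair $\{(0,0),\pt w\}$, $\pt w\in S_0$, into Lemma~\ref{L:blowup}: for each $\pt a\neq(0,0)$ dominated by such a pair on $T_p$, the blown-up pair $\{(0,0),\pi_n(q\pt w)\}\subseteq S$ dominates every $\pt b\in T_n$ with $\pi_p(\pt b)=\pt a$. Since the pairs $\{(0,0),\pt w\}$ jointly dominate $T_p\setminus\{(0,0)\}$, the set $S$ dominates every point of $T_n$ whose reduction modulo~$p$ is not $(0,0)$. What remains is the residue class $Z=\{\pt b\in T_n:\pt b\equiv(0,0)\Mod p\}$, a scaled copy of $T_q$, of which only $(0,0)$ has been covered so far.

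The residue class $Z$ is the heart of the argument, and this is where the hypothesis on $\gcd(p,q)$ enters. If $\gcd(p,q)=1$, I would use the observation recorded right after Corollary~\ref{C:blowup}: the blown-up pair $\{(0,0),\pi_n(q\pt w)\}$ can also be read as a blow-up centred at $\pi_n(q\pt w)$ of a pair whose supporting line on $T_p$ runs through $(0,0)$, and since $\gcd(p,q)=1$ makes $q\pt w\not\equiv(0,0)\Mod p$, Corollary~\ref{C:blowup} then shows that this very pair of $S$ dominates all of $Z$; hence $S$ is dominating and $\gd_n^T\le 2p$. If instead $p\mid q$ this second reading is unavailable, since the relevant difference vector dies modulo~$p$, and I would dominate $Z$ on its own. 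The clean route is recursion: inside $Z\cong T_q$ the set $S$ is itself the blow-up of $S_0$ by the factor $q/p$, so the same analysis applies within $T_q$, the still-uncovered part shrinks by a factor $p$ at each step, and after finitely many steps it is a torus whose order is coprime to $p$, which the previous case finishes off; any residual deficiency is repaired by at most two extra points, for a total of $\gd_n^T\le 2(p+1)$.

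The step I expect to be the main obstacle is making the ``second reading'' of a blown-up pair rigorous: Corollary~\ref{C:blowup} is phrased for pairs that literally lie in $T_p$, whereas the pair $\{(0,0),\pi_n(q\pt w)\}$ to be reinterpreted lives in $T_n$, so one must either establish the translation-invariant form of the corollary (the set of points a pair dominates depends only on its difference vector, up to a translation) or choose the base set $S_0$ and the blow-up centre so that the pair the corollary produces genuinely belongs to $S$. The companion point needing care is that the recursion in the non-coprime case terminates cheaply — one has to check that at the terminal, coprime torus the pencil of origin lines inherited from $S_0$ really suffices, so that only $O(1)$ correction points are ever needed; this is exactly the slack that forces the bound up from $2p$ to $2(p+1)$.
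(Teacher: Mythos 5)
Your overall strategy is the one the paper uses: take a $2p$-point dominating set of $T_p$ in which every point is dominated by a pair through a fixed point, blow it up by the factor $q$, cover all residue classes other than that of the fixed point via Lemma~\ref{L:blowup}, and, when $\gcd(p,q)=1$, re-read each blown-up pair with the other endpoint as centre (Corollary~\ref{C:blowup}) to sweep up the remaining class. Your base configuration differs slightly -- a pencil of pairs through $(0,0)$ spanning two parallel lines, versus the paper's $p$ disjoint horizontal pairs $\{(0,a),(1,a)\}$ each blown up about its own centre -- but this is cosmetic, and your verification that two parallel lines of $T_p$ give the required pencil at $(0,0)$ is correct. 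The translation-invariance issue you flag in the coprime case is real but benign: lines on $T_n$ are translates of lines through the origin, so the set a pair dominates depends only on its difference vector up to translation, and the paper itself leans on exactly this observation in the remark following Corollary~\ref{C:blowup}. So the $\gcd(p,q)=1$ half of your argument is sound.

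The genuine gap is in the non-coprime case. Your recursion is only sketched: you never verify that domination inside the residue class $Z\cong T_q$ (after rescaling by $p$) transfers back to domination in $T_n$ -- i.e., that a line of $T_q$ scales to a subset of a line of $T_n$ -- nor do you analyse the terminal step of the chain $T_n\supseteq T_q\supseteq T_{q/p}\supseteq\cdots$, and the closing sentence ``any residual deficiency is repaired by at most two extra points'' neither exhibits the two points nor proves they suffice, so the bound $2(p+1)$ is asserted rather than established. Ironically, if you did push the recursion through (the scaling fact is true, and the chain bottoms out either at a coprime torus $T_{pr}$, handled by your first case, or at $T_p$ itself, where the rescaled set is $S_0$), you would get the stronger bound $2p$ with no extra points, in line with what Theorem~\ref{thm:exact} gives for $n=p^k$; the weaker claim $2(p+1)$ then signals that you were hedging rather than proving. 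The paper avoids all of this by simply adjoining two explicit points: in your setup, add $(0,1)$ and $\pi_n(0,1+q)$, which by Corollary~\ref{C:blowup} (centre $(0,1)$, blown up from the $T_p$-pair $\{(0,1),(0,2)\}$) dominate every point congruent to $(0,b)$ with $(0,b)\neq(0,1)$ modulo $p$, in particular the entire class of $(0,0)$, giving $2p+2=2(p+1)$ directly. Either supply that explicit repair or carry the recursion to completion; as written, the second half of the statement is not proved.
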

\begin{proof}
Choose a trivial dominating set of $T_p$ with two points in each column. That is, for example, 
\[S = \{(a,b) \in T_p \mid a \in \{0,1\}, 0 \leq b \leq p-1\}\]
Now, blow up only pairs of this set, meaning that a pair $\{(0,a),(1,a)\}$ corresponds to the pair $\{(0,a), (q,a)\}$ which dominates all points $(b_1,b_2)$ with $\pi_p(b_1,b_2) = (b,a)$, $1\leq b \leq p-1$ by Corollary \ref{C:blowup}. If $\gcd(p,q) = 1$, then $\pi_{p}(q,a) \neq (0,a)$ and consequently, we can apply Corollary \ref{C:blowup} with $(q,a)$ as a fixed point and deduce all points congruent to $(0,a)$ are dominated as well.\\
If $\gcd(p,q) > 1$, we need to add further points to dominate all of these remaining points. Thus, we add another point at $(0,q)$. In combination with $(0,0)$, it  dominates all points $(b_1,b_2)$ with $\pi_p(b_1,b_2) = (0,a)$, $1\leq a\leq p-1$ and another at $(0,1+q)$ which, in combination with $(0,1)$ dominates all points congruent to $(0,0)$ modulo $p$.
\end{proof}

Now, given the fact that we know that there exists a dominating set of size $c\sqrt{p \log p}$ of $T_p$ if $p$ is prime, it is reasonable to assume that one can find a dominating set of size $o(n)$ that satisfies the condition of the following theorem and would give a better upper bound than the previous theorem.

\begin{theorem}\label{thm:exact}
    Let $n = pq$, where $p$ is prime and let $S$ be a dominating set of $T_p$ and $\pt{x}\in S$ such that every point in $T_p$ is dominated by a line incident to $\pt{x}$ and another point in~$S$. Then 
    \[\gd_n^T \leq |S|+2.\]
    If $n=p^k$, then it holds
    \[\gd_n^T \leq |S|.\]
\end{theorem}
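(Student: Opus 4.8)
The plan is to leverage the blow-up machinery (Lemma~\ref{L:blowup} and Corollary~\ref{C:blowup}) together with the ``two-sided'' observation recorded just before Theorem~\ref{thm:exact}: a pair $\{(0,0),\,q(y_1,y_2)\}$ on $T_n$ can be read both as a blow-up of $\{(0,0),(y_1,y_2)\}$ and as a blow-up of $\{q(y_1,y_2),(q-1)(y_1,y_2)\}$. Fix the special point $\pt{x}$ promised in the hypothesis; by translation we may assume $\pt{x}=(0,0)$. For each $\pt{y}\in S\setminus\{\pt{x}\}$, place the point $\pi_n(q\pt{y})$ into our candidate set $S'$ on $T_n$, and also keep $(0,0)$; this gives $|S'|\le |S|$ points, but so far it only dominates, for every $\pt{y}$, all $(b_1,b_2)$ with $\pi_p(b_1,b_2)$ equal to a point dominated by the line $(0,0)\pt{y}$ in $T_p$ --- and, crucially, $(0,0)$ dominates (as a grid point of $S$) every residue class except possibly the class of $(0,0)$ itself. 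So the only residue class of $T_p$ that might fail to be dominated on $T_n$ is the class of $(0,0)$, i.e.\ the points $(ip,jp)$.

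To repair this I would add two further points, giving the bound $|S|+2$ in the general case. Concretely, pick any $\pt{y}\in S\setminus\{(0,0)\}$ with $q\pt{y}\not\equiv(0,0)\Mod p$ and apply Corollary~\ref{C:blowup} to the pair $\{q\pt{y},(q-1)\pt{y}\}$ viewed as a blow-up of itself with $q\pt{y}$ as the fixed point; since in $T_p$ the point $(0,0)$ is dominated by $(0,0)$ trivially (it lies in $S$) the relevant statement to invoke is that $\{q\pt{y}\}$ together with a suitably chosen extra point --- the ``inner'' representative $(q-1)\pt{y}$ of the same line --- dominates every lift of the residue class of $(0,0)$. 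Adding $(q-1)\pt{y}$ (and, if needed for the degenerate line directions, one more auxiliary point, exactly as in the proof of the previous theorem where points at $(0,q)$ and $(0,1+q)$ were added) covers the missing class. Bookkeeping the worst case over line directions is what forces ``$+2$'' rather than ``$+1$''.

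For the sharper statement when $n=p^k$, the point is that the obstruction disappears: here $q=p^{k-1}$, so $q\pt{y}\equiv(0,0)\Mod p$ automatically, and the two readings of $\{(0,0),q\pt{y}\}$ coincide --- $q\pt{y}$ and $(q-1)\pt{y}$ are now both lifts of the origin's class, handled by the same blow-up. More to the point, one should argue by induction on $k$: having a dominating set $S$ of $T_{p^{k-1}}$ with a distinguished point $\pt{x}$ through which all lines pass, blow up by the factor $p$ to get a set $S'$ of the same size on $T_{p^k}$; Lemma~\ref{L:blowup} shows $S'$ dominates every point whose residue mod $p^{k-1}$ is dominated in $T_{p^{k-1}}$, which is every residue class, and one checks the residue class of $(0,0)$ is still caught because the blown-up lines through $(0,0)$ sweep out exactly the points congruent to $(0,0)$ (the ``bundling'' remark preceding Lemma~\ref{L:blowup}). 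Hence $\gd^T_{p^k}\le\gd^T_{p^{k-1}}\le\cdots\le\gd^T_p=|S|$, and since on $T_p$ every point is automatically incident to every line class through a fixed point (Theorem~\ref{thm:tordown}'s setup, $p$ prime), the hypothesis propagates.

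The main obstacle I anticipate is the careful case analysis ensuring the residue class of $(0,0)$ is genuinely covered: Lemma~\ref{L:blowup} is stated for $(a_1,a_2)\neq(0,0)$, so the origin's class is a genuine blind spot of the naive construction, and one must verify that the ``inner representative'' trick (Corollary~\ref{C:blowup} applied with $q\pt{y}$ as the fixed point, using that $q\pt{y}\not\equiv(0,0)$ when $\gcd(p,q)=1$ but handling $\gcd(p,q)>1$ separately via an extra point) really does dominate \emph{all} lifts $(ip,jp)$ for \emph{every} direction arising in $S$, including degenerate axis-parallel directions. Matching this against exactly two extra points --- and checking that in the prime-power case no extra points at all are needed because the bad class is swept by the blow-up itself --- is the delicate part; the rest is a direct application of the already-established blow-up lemmas.
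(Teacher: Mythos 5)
Your overall strategy is the paper's: translate so that $\pt{x}=(0,0)$, blow up $S$ to $S'=\{q\pt{y}\mid \pt{y}\in S\}$, use Lemma~\ref{L:blowup} to dominate every residue class except that of $(0,0)$, patch that class with two extra points, and exploit the self-similar structure to drop the extra points when $n=p^k$. Two steps, however, do not go through as you have written them.

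First, the ``$+2$'' patch. The two-sided reading of $\{(0,0),q\pt{y}\}$ requires no new point at all and only works when $q\pt{y}\not\equiv(0,0)\Mod p$; adding $(q-1)\pt{y}$ as a \emph{new} point is not what that observation provides, and your fallback ``points at $(0,q)$ and $(0,1+q)$ as in the previous theorem'' does not transfer: in that earlier proof $(0,q)$ and $(0,1+q)$ worked because $(0,0)$ \emph{and} $(0,1)$ were both already in the blown-up set, whereas here $S'$ need not contain $(0,1)$, and $\{(0,q),(0,1+q)\}$ by itself is not a blown-up pair (its difference is $(0,1)$, not a multiple of $q$), so Corollary~\ref{C:blowup} does not apply to it when $p\mid q$. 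The paper instead adds $(0,1)$ and $(0,q+1)$: this \emph{is} a blown-up pair (difference $(0,q)$) of $\{(0,1),(0,2)\}\subset T_p$, whose line contains the whole column and in particular $(0,0)$, so it unconditionally dominates every lift of the class of $(0,0)$. You never exhibit two concrete points that provably work in all cases.

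Second, the $n=p^k$ case. Your bottom-up induction blows up a dominating set of $T_{p^{k-1}}$ by the factor $p$, which invokes Lemma~\ref{L:blowup} with base torus $T_{p^{k-1}}$; the lemma is only proved for a \emph{prime} base (it uses invertibility of $a_1$ modulo $p$), so this fails for $k\geq 3$. The paper runs the recursion in the opposite direction: blow up from $T_p$ by $q=p^{k-1}$ once, observe that the undominated class $\{(b_1,b_2)\equiv(0,0)\Mod p\}$ is a copy of $T_{p^{k-1}}$ containing $S'$ as the blow-up of $S$ by $p^{k-2}$, and coarsen repeatedly until reaching $T_p$, which $S$ dominates by hypothesis; every application of the lemma then has prime base. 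Your version is repairable by reorganizing it this way, but as stated the induction step rests on a lemma that does not cover it.
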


\begin{proof}
    Without loss of generality, let $\pt{x}=(0,0)$ and we define the blown up dominating set of $S$ to be $S' = \{qy \mid y \in S\}$. If every point is dominated by a line incident to $\pt{x}\in S$, we know by Lemma \ref{L:blowup} that the blown up copy $S'$ will dominate all points in $T_n$ that are not congruent to $(0,0)$.
    
    Now, we place another two points, one at $(0,1)$ and one at $(0,q+1)$. By Corollary \ref{C:blowup}, these two points will dominate all points which are congruent to $(0,b), b \neq 1$ including those that are congruent to $(0,0)$ modulo $p$.
    
    If $n=p^k$, we do not need to place additional points, since the subgrid of all points that are congruent to $(0,0)$ modulo $p$ contains the set $S'$ as a blown up copy of $S$ (by a factor of $p^{k-2}$. Note that this is equivalent to coarsening the grid on $T_n$ by a factor of $p$. See Figure \ref{fig:meshes}). By Lemma \ref{L:blowup} we can conclude that all points that are not congruent to $(0,0)$ modulo $p^2$ are dominated and continue this argument until we end up with a $p\times p$ grid that is dominated by $S$ by assumption.
\end{proof}

\begin{figure}[h]
    \centering
    \includegraphics[page=3, width=0.3\textwidth]{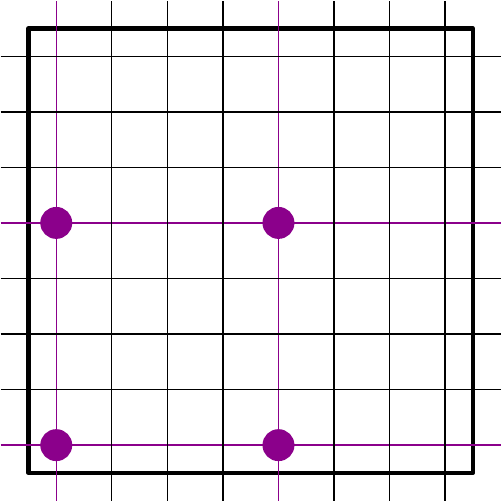}
    \includegraphics[page=1, width=0.3\textwidth]{8x8torus.pdf}
    \includegraphics[page=2,width=0.3\textwidth]{8x8torus.pdf}
    \caption{Refining the $2\times2$ discrete torus to an $8\times8$ torus and back to a $4\times4$ torus}
    \label{fig:meshes}
\end{figure}

As an example, we can improve the upper bound for even numbers.

\begin{corollary}
If $n$ is even, it holds that $\gd_{n}^T \leq 4$.
\end{corollary}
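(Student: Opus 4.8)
The plan is to apply Theorem~\ref{thm:exact} with $p=2$. For this we need a dominating set $S$ of the $2\times 2$ torus $T_2$ and a point $\pt{x}\in S$ such that every point of $T_2$ lies on a line incident to $\pt{x}$ and another point of $S$. First I would observe that $T_2$ has only four points, $(0,0),(1,0),(0,1),(1,1)$, and exhibit $S = T_2$ itself, with $\pt{x}=(0,0)$: the line through $(0,0)$ and $(1,0)$ covers $(0,0),(1,0)$, the line through $(0,0)$ and $(0,1)$ covers $(0,0),(0,1)$, and the line through $(0,0)$ and $(1,1)$ covers $(0,0),(1,1)$. Hence all four points of $T_2$ are dominated by lines through $\pt{x}=(0,0)$ and another point of $S$, so the hypothesis of Theorem~\ref{thm:exact} is satisfied with $|S|=4$.

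Next I would split into the two cases according to the structure of~$n$. If $n$ is even but not a power of $2$, write $n = 2q$ with $q\geq 2$; then $p=2$ is prime and Theorem~\ref{thm:exact} directly gives $\gd_n^T \leq |S|+2 = 6$. That is not yet the claimed bound of $4$, so the even-but-not-a-power-of-$2$ case needs the second, stronger conclusion; I would therefore instead argue as follows: since $2\mid n$ we may take $p=2$, and we are in the case $n=p\cdot q$; but to get $4$ I want to exploit that the trivial two-per-column dominating set of $T_2$ can be blown up more economically. Actually the cleanest route is: for $n$ even, $n=2q$, apply the blow-up of the four-point set $S=T_2$ using Lemma~\ref{L:blowup} / Corollary~\ref{C:blowup} and check that the blown-up copy $S' = \{q\cdot y : y\in S\} = \{(0,0),(q,0),(0,q),(q,q)\}$ already dominates \emph{all} of $T_n$, including the points congruent to $(0,0)$ modulo $2$ — the extra two points of Theorem~\ref{thm:exact} are unnecessary here because $S$ already has the property that \emph{every} pair of its points, not just pairs through one fixed point, can be used, and in particular the pair $(q,0),(q,q)$ (a vertical pair) together with $(0,0),(0,q)$ covers the remaining residue class. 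So $\gd_n^T \leq 4$.

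If $n$ is a power of $2$, say $n=2^k$, then the second part of Theorem~\ref{thm:exact} applies verbatim with $p=2$: we get $\gd_n^T \leq |S| = 4$, using the four-point set $S=T_2$ described above. Combining the two cases yields $\gd_n^T\leq 4$ for every even~$n$.

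The main obstacle is the even-but-not-a-power-of-$2$ case, where the black-box application of Theorem~\ref{thm:exact} only gives $6$. The point to nail down carefully is why the two auxiliary points $(0,1),(0,q+1)$ in that theorem are not needed when $p=2$: one must verify that the four blown-up points $\{(0,0),(q,0),(0,q),(q,q)\}$ dominate the residue class of $(0,0)$ modulo~$2$ on $T_n$. This follows because on $T_n$ the horizontal line through $(0,0)$ and $(q,0)$ passes through $(2q\bmod n,0)=(0,0)$... wait — more carefully, the line through $(0,0)$ and $(q,q)$ and the line through $(q,0)$ and $(0,q)$ each, by the blow-up lemma applied with $(q,\cdot)$ as the fixed point (legitimate since $q\not\equiv 0 \bmod 2$ as $q$ is odd when $n$ is not a power of $2$... but $q$ could be even!), cover the points congruent to $(0,0)$. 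I would resolve this by instead invoking Corollary~\ref{C:blowup} directly: the pair $(0,0)$ and $(q,q) = \pi_n(0+q(1-0), 0+q(1-0))$ is the blow-up of the $T_2$-pair $(0,0),(1,1)$, which dominates $(1,1)\in T_2$; so it dominates every $(b_1,b_2)$ with $\pi_2(b_1,b_2)=(1,1)$, and the companion pair $(q,0),(0,q)$ handles the class of $(0,0)$ via the "other cutout" remark preceding Corollary~\ref{C:blowup}. Writing that verification out cleanly is the crux; everything else is a direct citation of Theorem~\ref{thm:exact}.
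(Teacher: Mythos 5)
Your reduction to Theorem~\ref{thm:exact} is sound where it applies: the verification that $S=T_2$ with $\pt{x}=(0,0)$ satisfies the hypothesis is correct, and the case $n=2^k$ is handled exactly as in the paper. The genuine gap is precisely where you flagged it yourself: for $n$ even but not a power of $2$ you write $n=2q$ with $q$ possibly \emph{even}, and then the residue-class bookkeeping collapses. If $q$ is even, all four blown-up points $(0,0),(q,0),(0,q),(q,q)$ are congruent to $(0,0)$ modulo $2$, so no pair of them can be read, from either endpoint, as the blow-up of a $T_2$-pair whose two members lie in distinct residue classes. In particular the ``other cutout'' remark does not rescue the pair $(q,0),(0,q)$: reading it from the endpoint $(q,0)$, the displacement is $q\cdot(-1,1)$ and $\pi_2(-1,1)=(1,1)$, so the underlying $T_2$-pair is $(0,0),(1,1)$, and Corollary~\ref{C:blowup} (which explicitly excludes the class of the fixed point, $(a_1,a_2)\neq(x_1,x_2)$) certifies only the class of $(1,1)$ --- again not the class of $(0,0)$. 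So your justification does not establish that the residue class of $(0,0)$ is dominated when $4\mid n$ and $n$ is not a power of $2$ (e.g.\ $n=12,20,24,\dots$); whatever truth there is to that claim would need a separate argument. Your fallback for $q$ odd --- using $(q,q)$ as the fixed point, where $\pi_2(q,q)=(1,1)\neq(0,0)$ --- is correct, but it genuinely requires $q$ odd.

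The paper closes exactly this gap by a different decomposition: it writes $n=2^k q$ with $q$ \emph{odd} and first coarsens the torus by a factor of $2^{k-1}$, using the same iterated-coarsening argument as in the $n=p^k$ case of Theorem~\ref{thm:exact}, thereby reducing everything to $T_{2q}$ with $q$ odd, where the reverse-fixed-point trick applies. To repair your write-up, replace the split ``power of $2$ versus $n=2q$'' by ``extract the full power of $2$ first, then treat $T_{2q}$ with $q$ odd''; the rest of your argument then goes through.
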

\begin{proof}
    We choose $S= T_2$. If $n=2^k$, this is a direct consequence of Theorem \ref{thm:exact}.

    For $n=2^kq$, where $q$ is odd, we can coarsen the grid on the torus by a factor of $2^{k-1}$ by the same arguments as in the proof of Theorem \ref{thm:exact} and only consider the torus $T_n$, where $n=2q$ with $q$ odd. Again, we know that all points that are not congruent to $(0,0)$ are dominated. \\
    Since $q$ is odd, $\pi_{p}(q,q) = (1,1)$ and we can apply Corollary \ref{L:blowup} to the situation where $(q,q)$ is the fixed point. Therefore we know that all points that are congruent to $(0,0)$ are dominated as well.
\end{proof}

So far, we have hardly exploited the observation that we can use Lemma \ref{L:blowup} in two directions if $\gcd(p,q)=1$. Here is another result one can derive from it, but once again, it could come in particularly useful in the design of algorithms that search for dominating sets. 

\begin{proposition}
If $n = 3q$, it holds that $\gd_{n}^T \leq 4$.
\end{proposition}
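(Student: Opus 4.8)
The plan is to imitate the even case: exhibit an explicit four-point set obtained as a blow-up and verify it dominates $T_n$, using the two-directional use of the blow-up lemma flagged just above. Write $n=3^{k}m$ with $k\ge 1$ and $\gcd(3,m)=1$, set $q'=n/3=3^{k-1}m$, and let
\[D_n=\{(0,0),\ (q',0),\ (0,q'),\ (q',q')\},\]
the blow-up by $q'$ of $\{0,1\}^2\subseteq T_3$ (the natural analogue of the set $T_2=\{0,1\}^2$ used for $n$ even). I would show $D_n$ dominates $T_n$ by induction on $k$, which gives $\gd_n^T\le 4$.

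\emph{Base case $k=1$.} Here $\gcd(3,q')=1$, so the two-directional form of Lemma~\ref{L:blowup} (Corollary~\ref{C:blowup} applied with the non-origin point as fixed point, using the freedom to choose a different fundamental domain, i.e.\ to translate the whole configuration) is available. I would handle the six pairs of $D_n$ one at a time: after translating $D_n$ by a suitable vector (domination is translation invariant), each pair becomes the blow-up by $m$ of a pair $\{(0,0),v\}$ of $T_3$ with $v$ one of the four line directions, and Lemma~\ref{L:blowup} together with its bidirectional version then shows the pair dominates \emph{all} grid points whose reduction mod~$3$ lies on the $T_3$-line through $v$. The net effect is that the six pairs dominate, respectively, all points with $y\equiv0$, $x\equiv0$, $x\equiv y$, $x\equiv m$, $y\equiv m$, and $x+y\equiv m$ modulo~$3$. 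Inspecting the nine residue classes — using only $m\not\equiv0\pmod3$ — shows these six lines already exhaust $\ZZ_3\times\ZZ_3$, so every grid point lies in a dominated class and $D_n$ is dominating.

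\emph{Inductive step $k\ge2$.} Now $q'\equiv0\pmod3$, all four points of $D_n$ are $\equiv(0,0)\pmod3$, and the bidirectional trick is no longer at our disposal for them; this is precisely why the case $3\mid q$ cannot be treated like the base case and needs a separate reduction. I would split $T_n$ into the points $\not\equiv(0,0)\pmod3$ and the subtorus $H$ of points $\equiv(0,0)\pmod3$. For the former part, Lemma~\ref{L:blowup} applied directly to the three pairs through $(0,0)$, together with the pair $\{(q',0),(0,q')\}$ after translating by $-(q',0)$ (which does not move residue classes, as $q'\equiv0$), realizes blow-ups of all four lines of $T_3$ through the origin and hence dominates all eight nonzero residue classes. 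For $H$, dividing by $3$ identifies $H$ with $T_{n/3}$ and carries $D_n$ (which lies in $H$) exactly to $D_{n/3}$; by the induction hypothesis $D_{n/3}$ dominates $T_{n/3}$, and scaling a collinear triple of $T_{n/3}$ by $3$ produces a collinear triple of $T_n$, so $D_n$ dominates $H$. The two parts together cover $T_n$.

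The step needing the most care is the base case: identifying, for each of the six pairs, the $T_3$-line it ``bundles'' into after the correct re-basing, and invoking the two-directional version of Lemma~\ref{L:blowup} properly so that an entire congruence line — not merely the single ``third point'' given by the one-directional statement — is dominated. Once those six lines are pinned down, checking that their union is all of $\ZZ_3\times\ZZ_3$ is a short finite verification, and the inductive peeling of the factors of~$3$ is routine.
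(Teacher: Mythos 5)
Your proof is correct and follows essentially the same route as the paper: the same four-point blow-up of $\{0,1\}^2\subseteq T_3$, the bidirectional use of the blow-up lemma when $\gcd(3,n/3)=1$, and coarsening by a factor of $3$ otherwise. Your version merely organizes the paper's one-line ``coarsen again until $\gcd(3,q)=1$'' as an explicit induction on the power of $3$ and tabulates the six congruence lines rather than listing residue classes, which is a presentational difference, not a mathematical one.
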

\begin{proof}
    We choose $S=\{(0,0),(0,t),(t,0),(t,t)\}$ with $t= q \Mod 3$ and we assume first that $\gcd(3,q)=1$. That is, $t$ is either $1$ or $2$ depending on the congruence class of $q$ and in both cases, if we blow up $S$ by $q$, we end up with a set that contains points that are congruent to $\{(0,0),(0,1),(1,0),(1,1)\}$ modulo $3$. 

    We know by similar argumentation as in previous proofs that all points which are congruent to $(0,0),(0,1),(0,2), (1,0),(2,0),(1,1)$ or $(2,2)$ are dominated and we are left to show that points congruent to $(1,2)$ and $(2,1)$ are dominated as well. However, given that we have two points that are essentially a blown up copy of $(0,1)$ and $(1,1)$, we know that all points congruent to $(2,1)$ are dominated, as we can apply Corollary \ref{C:blowup} once again and choose for example $(0,tq)$ as the fixed point. The same works by symmetry for points congruent to $(1,2)$.\\
    
    In the case where $\gcd(3,q)>1$, the points in the blown up copy satisfy $(0,0) = (0,qt) = (qt,0) = (qt,qt) \Mod 3$. However, $(0,qt)$ and $(qt,0)$ are a blown up copy of points congruent to $(0,0)$ and $(1,-1)=(1,2) \Mod 3$, which dominate all points congruent to $(1,2)$ and $(2,1)$. Therefore, we can coarsen the grid again until we are in the situation where $\gcd(3,q)=1$.
\end{proof}

\section{Variations of the Problem and Conclusion}

Another interesting variant of the geometric dominating set problem is a game version: Two players alternatingly place a point on the $n \times n$-grid such that no three points are collinear. The last player who can place a valid point wins the game (this is called normal play in game theory). It is not hard to see that for any even $n$ the second player has a winning strategy. She just always sets the point which is center mirrored to the previous move of the first player. By symmetry arguments this move is always valid, as long as the first player made a valid move. For $n$ odd the situation is more involved. If the first player does not start by placing the central point in her first move, then we have again a winning strategy for the second player by the same reasoning (note that the central place can not be used after the first two points have been placed, as this would cause collinearity). So if the first player starts by placing the central point it can be shown that for $n=3$ she can also win the game. But for $n=5,7,9$ still the second player has a winning strategy. For odd $n$ we currently do not know the outcome for games on grids of size $n\geq11$.

Several open problems arise from our considerations:
\begin{itemize}
\item Is there a constant $c>0$ such that $\gd_n \leq \gdd_n \leq (2-c)n$ holds for large enough $n$?
\item Do $\gdd_n$ and $\gd_n$ grow in a monotone way, that is, is  $\gdd_{n+1} \geq \gdd_n$ and $\gd_{n+1} \geq \gd_n$?
\item Is there some $n_0$ such that $\gdd_n > \gd_n$ for all $n \geq n_0$? 
\item Do minimal dominating sets in general position always have even cardinality? For $n \leq 12$ this is the case, but the currently best example for $n=17$ might be a counterexample. 
\item How much can the size of dominating sets (with or without collinear points) be improved if the points are allowed to lie outside the grid?
\item Which player has a winning strategy in the game version for boards of size $n\geq11$, $n$ odd?
\item For which values of $n$ can we compute the exact domination number of the $n\times n$ torus?
\end{itemize}

\subparagraph*{Acknowledgments.} 
This research was initiated at the $33^{rd}$ Bellairs Winter Workshop on Computational Geometry in 2018, and continued at the 2019 edition of this workshop. We thank the organizers and participants of both workshops for a very fruitful atmosphere.

\bibliography{DomSets_Bib}{}

\end{document}